\newcommand{\rom}[1]{\romannumeral #1}
\newcommand{\nop}[1]{}
\begin{document}






%

\title{Tuning Crowdsourced Human Computation}

\author{\hspace*{3pt}  Chen Cao$^{\S}$ \hspace*{1pt}Zheng Liu $^{\S}$ \hspace*{1pt}
Lei Chen$^{\S}$ \hspace*{1pt} H. V. Jagadish   $^{\dag}$\hspace*{1pt}\\
{$^\S$Hong Kong University of Science and Technology, Hong Kong, China } \\
{$^\dag$University of Michigan, Ann Arbor, MI, USA} \\
{{caochen@cse.ust.hk, zliu@cse.ust.hk, leichen@cse.ust.hk,
jag@umich.edud}}
\\}

\maketitle
\newtheorem{definition}{Definition}
\newtheorem{lemma}{Lemma}
\newtheorem{theorem}{Theorem}
\newtheorem{corollary}{Corollary}
\newtheorem{Example}{Example}
\newtheorem{motivation example}{Motivation Example}
\newtheorem{conjecture}{Conjecture}
\newtheorem{hypothesis}{Hypothesis}
\newcommand{\ud}{\mathrm{d}}
\newcommand{\ue}{\mathrm{e}}
\newcommand*\diff{\mathop{}\!\mathrm{d}}

\begin{abstract}
As the use of crowdsourcing increases, it is important to think about performance optimization. For this purpose, it is possible to think about each worker as a  \textbf{HPU}(\textit{Human Processing Unit}\cite{davis:hpu10}), and to draw inspiration from performance optimization on traditional computers or cloud nodes with CPUs. However, as we characterize HPUs in detail for this purpose, we find that there are important differences between CPUs and HPUs, leading to the need for completely new optimization algorithms.

In this paper, we study the specific optimization problem of obtaining results fastest for a crowd sourced job with a fixed total budget. In crowdsourcing, jobs are usually broken down into sets of small tasks, which are assigned to workers one at a time.  We consider three scenarios of increasing complexity: \emph{Identical Round Homogeneous} tasks, \emph{Multiplex Round Homogeneous} tasks, and \emph{Multiple Round Heterogeneous} tasks. For each scenario, we analyze the stochastic behavior of the HPU clock-rate as a function of the remuneration offered. After that, we develop an optimum Budget Allocation strategy to minimize the latency for job completion. We validate our results through extensive simulations and experiments on Amazon Mechanical Turk.
\end{abstract}

\keywords{Crowdsourcing; Algorithm}

\section{Introduction}\label{section:intro}

Human Computation \cite{law:human11} has emerged in recent years as
a new and exciting compute paradigm. As a powerful compement of
traditional computer systems, human computation naturally allows
tasks with human-intrinsic values or features, like comparing
emotions of speeches, identifying objects in images and so on.  The
emergence of public crowdsourcing platforms, which provide a
scalable manageable workforce resource, has boosted the utilization
of this long-discovered\cite{grier:ieeeann98} human cognitive
ability.  A wide range of data-driven applications now benefit from
human computation by considering it as a new computing component.
Examples include
\begin{inparaenum}[\itshape a\upshape)]
  \item crowd-powered databases\cite{aditya:scoop11,michael:sigmod11:crowddb,marcus:vldb11} and fundamental operators like filtering\cite{aditya:sigmod12} and Max\cite{venetis:www12,guo:sigmod12}, group-by\cite{milo:icdt13},
  \item advanced data processing technologies like image tagging\cite{reynold:icde13:tagging}, schema matching\cite{jason:vldb13} and entity resolution\cite{jiannan:sigmod13}, and
  \item combinatorial problems like planning\cite{milo:icde13} and mining\cite{milo:sigmod13}.
\end{inparaenum}

\begin{figure}[htbp] \centering
\subfigure[Example 1: Repetition] { \label{fig:eg1}
\includegraphics[height = 1.05in,width=0.45\columnwidth]{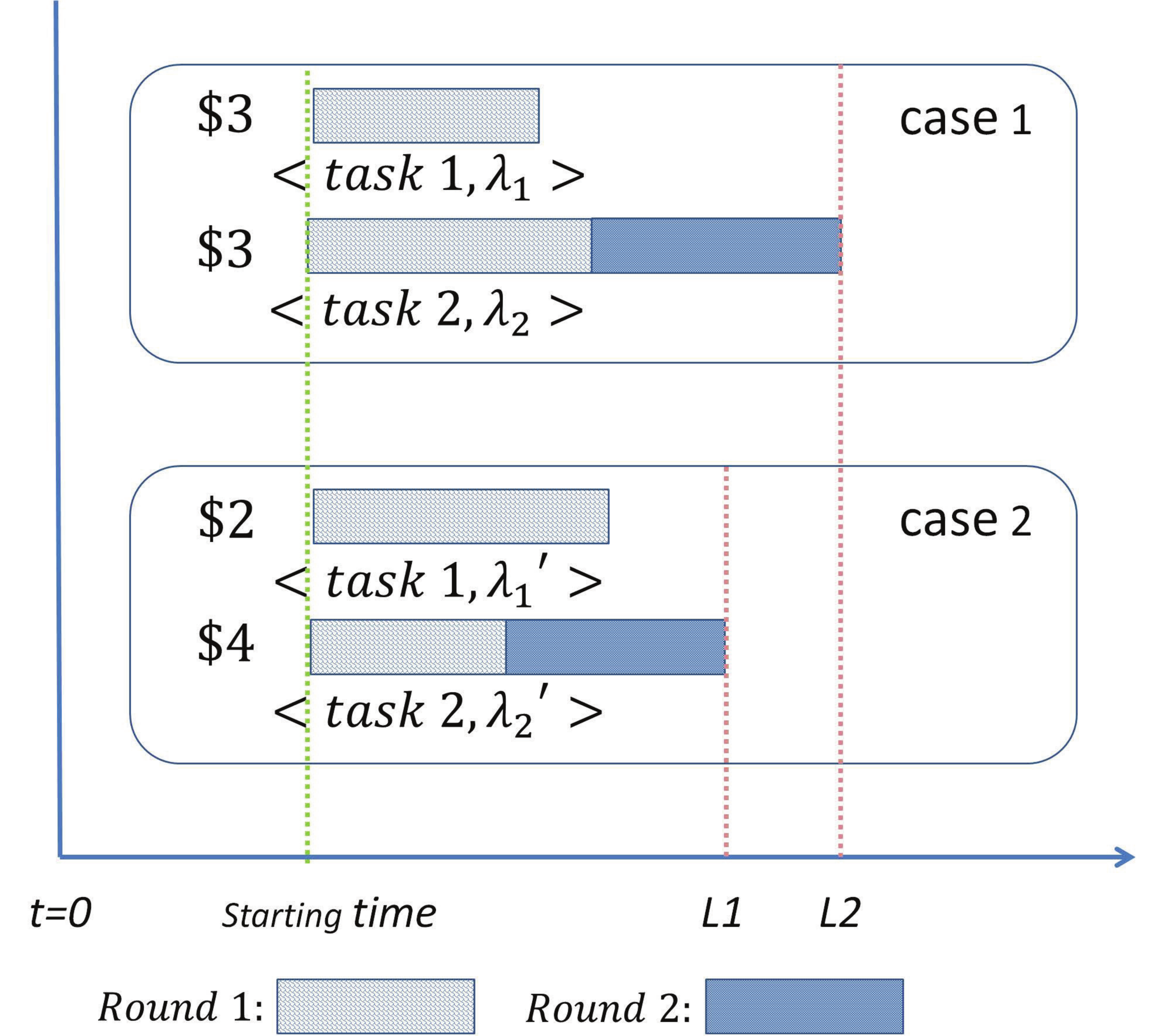}
}
\subfigure[Example 2: Heterogeneous] { \label{fig:eg2}
\includegraphics[height = 1.05in,width=0.45\columnwidth]{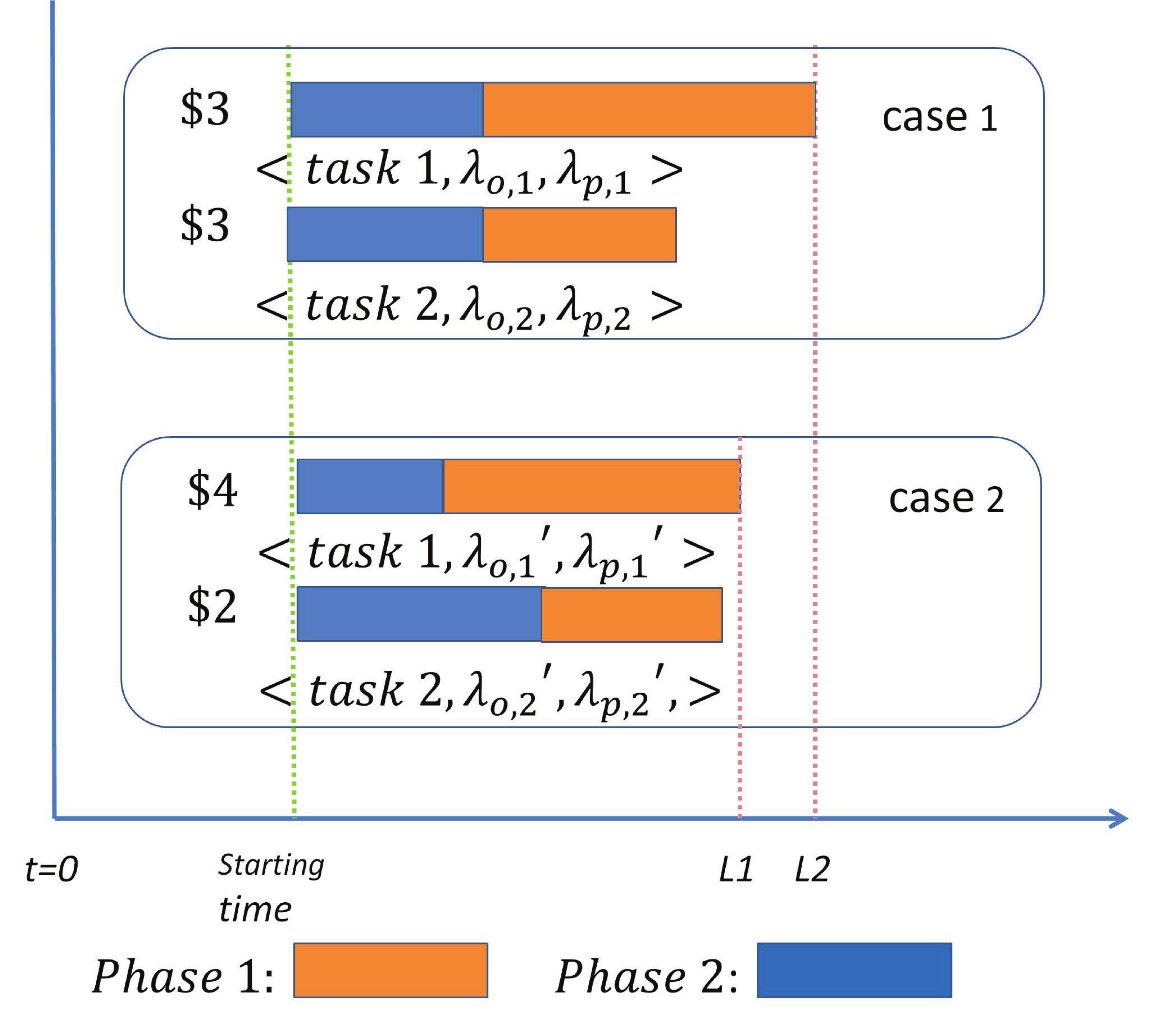}
}
\caption{Demonstration of Motivation
Example}\label{fig:example}
\end{figure}


As crowdsourcing becomes more prevalent, there is an effort to
understand and characterize it better.  In this regard, it has been
suggested that the system can be viewed as comprising Human
Processing Units (HPUs) that are analogous to CPUs of traditional
computers.  As atomic task performed by a worker is then one
``instruction'' of the HPU, and the time to respond is the ``clock
cycle''. However, the HPU has many characteristics that differ from
those of a CPU:

\begin{inparaenum}[\itshape i\upshape)]
  \item the clock time is \textbf{stochastic};
  \item the results are \textbf{error-prone} according to a probability;
  \item the cost includes \textbf{monetary} expense.
\end{inparaenum}

Given the HPU abstractions, one can consider optimizing many aspects
of HPU processing. Our focus, in this paper, is the HPU clock rate.
This is because we want to minimize the total latency of a
computational task by optimizing HPU clock rate.


The clock rate for the HPU is variable, and is different for each instruction and each instance.
In a crowdsourcing workforce market, a
task is exposed on the market with a promised reward, and then
``workers'' select the task to work on according to their interests.
Recent studies on Amazon Mechanical Turk (AMT) report that the
task acceptance duration follows an exponential
distribution\cite{jwang:csdm11,faridani:hcomp11}, and the rate
$\lambda_{i}$ is mostly determined by the promised
reward\cite{mason:hcomp09} and the type(difficulty) of the
task\cite{faridani:hcomp11}. Then, the processing time of a task
follows an exponential distribution with another
rate\cite{yan:mobisys10}, which is independent of the promised
payment\cite{mason:hcomp09}.

Once the task has been specified, the  only task-owner input that
can control the completion time is the \emph{payment}. If we only
have one task to be performed by HPUs, the solution is very simple
-- the more we can afford to pay, the faster the task will be
completed.

Of course, the computational job at hand is typically performed with
the aid of many HPU tasks
\cite{aditya:scoop11,michael:sigmod11:crowddb,marcus:vldb11}.  In
fact, a typical algorithm architecture repeats each task in multiple
times.  Thus, the requester issues a large number of HPU tasks in
parallel, each possibly to be repeated, and then waits for all HPUs
to return result.  As such, there is limited value to optimize the
clock rate of a single HPU in isolation: what really matters is the
latency of the entire computation, which is determined by the
longest duration among the set of parallel repeated tasks. Thus, our
optimizing HPU clock rate problem is in fact on studying how to
allocate a given fixed budget $B$ in a manner that minimizes the
total latency of a computational job involving HPUs.


To demonstrate the challenging issues involved in optimizing HPU
clock rates, let us consider the following two motivating examples,
both based on a crowd-powered data-base, as proposed in
\cite{aditya:scoop11,michael:sigmod11:crowddb,marcus:vldb11,venetis:www12,guo:sigmod12}.

\begin{motivation example}(Figure~\ref{fig:eg1})
Consider a \textit{sorting} task on 4 given items $O=\{o_{1}, o_{2},
o_{3}, o_{4}\}$. According to the user's requirements, the query
planner, for example the \textit{``next votes''} proposed in
\cite{guo:sigmod12}, decomposes the sorting task into atomic
pairwise voting tasks $T=\{\{o_{1},o_{2}\}\times 1,
\{o_{3},o_{4}\}\times 2\}$, which means the HPU is expected to run
the task of comparison on such pairs for 1 and 2 repetitions (times)
respectively. As illustrated in Figure~\ref{fig:eg1}, the two tasks
commence at the same time, but in order to finalize the entire
query, the database has to wait until the end of the longest atomic
task. There are many choices for budget allocation.  Two obvious
ones are: one is evenly divided to two tasks, 3 for task 1 and 3 for
task 2(case 1); another one is more load-sensitive, 2 for task 1 and
4 for task 2(case 2). The results for the two cases are shown in the
figure, suggesting that the second option is better. But how could
we predict this? Moreover, even if this is the better of these two
choices, is it the best?  What is the best allocation across the two
tasks?
\end{motivation example}

\begin{motivation example}(Figure~\ref{fig:eg2}) Consider now a more complex scenario in which
the database is required to process two types of queries
simultaneously, sorting and filtering\cite{aditya:sigmod12}, where
the latter can also be decomposed into pairwise voting
tasks(\textit{yes} or \textit{no} voting). Suppose two tasks are
given $T=\{\{o_{1},o_{2}\}\times 1,
\{o_{3},\textrm{yes}?\textrm{no}\}\times 1\}$. However, unlike the
previous case, different types of task present different difficulty
levels, which leads to different rates. As shown in
Table~\ref{table:relation}, for the same price $p_{i}$, the
processing rate $\lambda_{p}$ of sort voting is lower than that
\textit{yes} or \textit{no} voting. In addition, the entire latency
depends on both how long a task is offered before it is accepted,
which depends on the reward offered, and also how long it takes to
complete the task, which depends on the task itself but not on
price. Unlike in the previous example, we now have to take such latency into
account as well.  Trying two budget allocations: evenly allocating
\$3 to two tasks; balancing budget according to difficulty, sorting
task with \$4 and filtering task with \$2, we get latencies as shown
in the figure. Once again, these are obviously not the only
allocations possible, and our interest is in finding the optimum,
compounded by the difficulty of predicting the uptake rate for any
reward level (notice that Table~\ref{table:relation} only gives us
values for a few price points), and of additionally folding in the
task completion time into the framework.

\end{motivation example}

As shown in the motivating example above, the allocation of budget
to the tasks matters to a great extent in terms of the overall HPU
processing latency. The difficulty of finding the optimal allocation
strategy is two-fold: 1) the latency of an atomic task is a random
variable which depends on the type of the task, the allocated
budget, and the current workforce market situation, therefore it is
non-trivial to predict the overall latency of a set of tasks,
particularly when they are of different types; 2) the search space
for finding an optimal solution is large so that efficient
algorithms and/or approximation tradeoffs are necessary.  The
promised payment has a minimum granularity(\$0.01 on AMT), which
renders the tuning process a discrete, rather than a continuous
optimization problem.

To address these challenges, the following contributions are made:

\begin{itemize}

\item In Section~\ref{section:2hpu}, we begin with HPU characteristics, develop a stochastic model to predict uptake rate as a function of reward amount, and show how to estimate the model parameters.  Using these results, we can determine the expected latency for any specific budget allocation choice.
\item In Section ~\ref{section:3tuning},we formally propose the \emph{H-Tuning Problem} to minimize the expected latency of a given set of tasks and propose probabilistic analysis and tuning strategies under three practical scenarios: \emph{Homogeneous, Repetition and Heterogeneous}.  In each case, we show how to solve an optimization problem with a large feasible space of possible budget allocation choices.
\item The performance of proposed strategies are verified on real crowdsourcing platform, and with simulation in Section ~\ref{section:4experiments}.
\end{itemize}

In addition, Section~\ref{section:5relatedwork} gives an overview of the related work. Section~\ref{section:6conclusion} makes conclusion to this work.

\section{Related Work}\label{section:5relatedwork}

Leveraging the HPU in hope of better performance is an attractive
topic ever since the emergence of crowdsourcing applications. Many
recent works have studied various optimization issues associated
with the HPU\cite{ooi:vldb12:cdas,caleb:vldb12}. Most of them focus
on the quality issue in terms of answer
confidence\cite{ipei:hcomp10:quality,milo:icde12}, and some efforts
on the optimization of monetary cost\cite{faridani:hcomp11}.However,
in the effort of designing an industrial level computing module,
speed or latency is always one of the most significant concerns
among the various properties. Most of current works touch this issue
by reducing the number of queries issued to the
crowds\cite{guo:sigmod12,jason:vldb13,law:human11,aditya:sigmod12,
milo:icde13,milo:sigmod13,marcus:vldb11}.Whereas, considering the
HPU a new ``hardware'' for general human computation, a lower-level
clock-rate model, instead of the higher-level number of queries, is
far more entailed. Unfortunately, the stochastic human behavior
makes this model rather intractable. Several applications tried to
optimize the HPU's performance in real time in order to finish tasks
before a preset
deadline\cite{little:uist10:vizwiz,yan:mobisys10,galen:sci11:mobilization}.
But their approaches are highly application-dependent and thus hard
to adapt to a general framework; in addition, the ``deadline''
semantic does not support the batch processing scenario where a
general HPU usually meets.

Meanwhile, another practical methodology is developed by recruiting
a set of prepaid worker, so that they can wait online and process
the task immediately after publishing. In the work
of\cite{bernstein:uist11:realtime}, the authors propose such a
pre-paid model to instantiate a real-time respond crowdsourcing
interface, and a Retailer Model is adopted to describe the prepaid
workers behavior\cite{bernstein:ci12}. Following the Retainer Model,
one analytic effort on optimally organizing the microtasks can be
found in \cite{patrick:wec12:crowdmanager}.  Note that the prepaid
implementation differs greatly from this work: the tasks for prepaid
implementation entails high instantaneity, where the tasks are
expected to be finalized in several seconds(the payments are
relatively higher as well); however, the HPU tuning assumes a
system-level perspective, where the latency of the task set varies
over a larger range according to the specific requirement of the
database users. Last, the \textit{Queuing Theory} based model of
prepaid implementation cannot be tailored into the HPU scenarios
easily.

This work is most related to \cite{aditya:vldb15}, where the problem
of minimizing crowdsourcing latency is formulated into two
optimization issues: \emph{1. minimizing the completion cost of all
the tasks given deterministic deadline of every task}, and \emph{2.
minimizing the latency with constrained budget}. The objective of
the problem discussed in this work is virtually same with the second
issue in above work. However, our work is distinguished with
\cite{aditya:vldb15} in the following aspects. First, the latency of
a crowdsourcing task is modeled with two phases: the on-hold phase
and the processing phase. Such consideration is consistent with the
real world scenarios. However, \cite{aditya:vldb15} only considers
the latency of the tasks' acceptance. Secondly, the crowdsourcing
tasks can be processed both parallel (multiple tasks being processed
simultaneously) and sequentially (one task calls for multiple
answering repetitions, which are submitted one after another). Both
processing manners are studied in this work, while
\cite{aditya:vldb15} minimizes the latency with the implicit setting
of pure parallel processing.

\section{The HPU Model}\label{section:2hpu}

In this section, we begin with the basic crowdsourcing framework,
develop the HPU  model and demonstrate how to estimate HPUs'
parameters. In short, we lay the foundation for the optimization
problem we consider in the next section.

\begin{table} \label{table:relation}
\footnotesize \caption{HPU Processing Rate for Motivation Example}
\centering
\begin{tabular}{|c|c|c|}
  \hline
  \backslashbox{reward(\$)}{task type} & sorting vote & \textit{yes} or \textit{no} vote\\
  \hline
  2 & 2 & 3\\
  \hline
  3 & 3 & 5\\
  \hline
  1.5 & 1.5 & 2\\
  \hline
\end{tabular}
\end{table}
We begin with definitions of standard crowdsourcing concepts:
\begin{itemize}

\item[-]  \textbf{Requester}: A requester publishes tasks, collects answers, and makes the promised reward payments.  Database-wise, the \textit{requester} is the higher-level \textit{``executor''} as in \cite{marcus:vldb11} or \textit{``task manager''} in \cite{michael:sigmod11:crowddb}. The requester has also been called the``task-holder'', ``job-owner'' or ``builder''. 

\item[-] \textbf{Worker}: A worker (or crowd-worker) performs the actual human processing tasks.  A worker arrives at the market in a uniformly random manner, and she immediately chooses one of the tasks to work on. The preference of task selection is based on her utility maximization principle. After a period of time, the worker finalizes the task by returning the answer to the requester. Note that some research\cite{faridani:hcomp11} reports that the
worker activity on Amazon MTurk observes fluctuation along both a
daily and a weekly basis. However due to the scale of data-driven
micro-tasks, which are mainly light-weight voting, such long-term
fluctuation can be ignored, provided that we use parameters that
recurrent. In Section~\ref{section:running_para} we discuss the
practical methodology to infer the realtime system parameters.


\item[-] \textbf{Task}: A task is the most decomposed operation that a \textbf{worker} may
work on. Unfortunately, there are intrinsic limits of human
cognitive capacity\cite{law:human11}, and huge differences are
observed in the demographics of crowd workers\cite{ross:chi10}.
Consequently, to ensure the coherence and reliability of the human
answers, a \textbf{worker} is restricted to perform a set of most
basic operations like selecting from several options, ranking within
a couple of objects, connecting between figures, tagging images with
text and so on. Many of these human operations can be categorized
into voting, where a latent true option needs to be located with
some effort (a period of time).

In the literature, tasks have sometimes been called ``jobs'', ``HIT(Human Intelligent Tasks)'' and so on.  However, we reserve the word ``job'' for the following:
\item[-] \textbf{Job}: A job is what the requester is responsible for.  A job is accomplished by invoking tasks in parallel in one or more phases, with possible additional computation performed at the requester at the beginning and end of each phase.  In this paper, we will consider three different structures for tasks in phases, as we shall see below.

\end{itemize}



\subsection{Worker Selection Model}
Based on the definition of \textit{worker} above, in a workforce market, a worker appears and starts working on a task uniformly at any time. Meanwhile, the worker's preference among the candidate tasks relies on the subjective utility measurement.

\subsubsection{Worker Appearing Time}
The online workers enter the crowdsourcing market with a random manner. For a short period of time, like a few hours for platforms like Amazon MTurk, (according to the statistics of workers' arrival which is publicly released on AMT), the workers' arrival rate (the number of workers arrive within the unit time) can be regarded to be a constant number. Such property enables us to model the worker's appearing time with the following process. Denote the current workers' arrival rate with the constant number $\lambda$. For a time interval of fixed length $\Delta t$, the probability of \emph{No worker appears} equals to $(1-\Delta t \cdot \lambda)$. Suppose a task is submitted at time ``0'', and the task is accepted rightly after a worker arrives, the distribution of its acceptance can be derived as follows: $P(t_{acc} \leq s) = 1 - P(N(s)=0) = 1 - (1-\Delta t \cdot \lambda)^{\frac{s}{\Delta t}}$,,
where $t_{acc}$ is the time when the task is accepted and $N(s)$ denotes the number of arriving worker at time stamp $s$. Taking limit to $\Delta t$ gives the following expression: $P(t_{acc} \leq s) = 1 - \lim_{\Delta t \to 0}(1-\Delta t \cdot \lambda)^{\frac{s}{\Delta t}} = 1- e^{-\lambda s}$.
Clearly, the acceptance time of a task follows exponential distribution on condition that the task is accepted once a worker arrives.

Recent research work in \cite{jwang:csdm11,faridani:hcomp11} delve
into more detailed analysis of when workers appear, with more
delicate consideration of time period and so on. Nevertheless, for an
encapsulated computation module, a major exponential model is
powerful enough for describing the latency characteristics.

\subsubsection{Task Preference}
In previous discussion, we make the assumption that a task is accepted once a worker arrives. However, workers have preferences over the tasks and tend to choose the task that can maximize her benefits. In other words, a task is accepted by an appearing worker with certain probability ``$p$''. Since we have pointed that after the submission of a task, the latency can only be adjusted though pricing, therefore $p$ is set to be variable affected by the task's price ``$c$'' (``$p(c)$''). Together with the worker's arrival rate, the probability of \emph{No worker accepts a task} is derived as: $(1- \lambda p(c)\Delta t)$ (when $p(c)=1$, such expression is equivalent to probability of ``No worker arrives'' presented in last part). Following the same procedure, the task's acceptance distribution is re-formulated as: $P(T<t) = 1 - e^{-\lambda_{c} t} = 1 - e^{-\lambda p\text{(c)} t}$,
where $\lambda_{c}$ ($\lambda_{c} = \lambda p\text{(c)}$) is the joint acceptance rate of price $c$.

A detailed discussion of choice model can be found in \cite{faridani:hcomp11}. But to better estimate the latency behavior, in Section~\ref{section:running_para}, we present a real-time technique to infer parameters for tuning strategies.

\subsection{The HPU Latency}
Like in traditional CPU-based applications, when a
single task is published to the HPU, there will be two phases before
the answers are returned and collected: on-hold phase and processing
phase. The first one is the period from the task being published to
the task being chosen by a worker; the second one is the period
waiting for answer from the worker. Statistical research has been
conducted on several crowdsourcing platforms to capture the traits
of such latencies\cite{jwang:csdm11,faridani:hcomp11,mason:hcomp09}.

\begin{definition}[Latency]
The On-hold Latency $L_{o}$ of a task (or a batch of tasks) is the clock time from when the task is published to the time when it is accepted by a worker. The Processing Latency $L_{p}$ of a task (or a batch of tasks) is the clock time from when the task is accepted to the time when the answer is returned and collected by the system. The Overall Latency $L$ is the sum of $L_{o}$ and $L_{p}$: $L = L_{o} + L_{p}$.
\end{definition}
According to the worker appearing behavior proposed previously, we can derive that the distribution of the overall latency as follows. Let $\lambda_{o}$ and $\lambda_{p}$ denote the \emph{clock rates} of the process in On-hold and Processing phase respectively, and the probability density function of the latencies are as follows
\begin{eqnarray*}
\begin{split}
f_{o}(t)=\emph{pdf}(L_{o}\leq t)=\lambda_{o}\ue^{-\lambda_{o}t},
f_{p}(t)=\emph{pdf}(L_{p}\leq t)=\lambda_{p}\ue^{-\lambda_{p}t}
\end{split}
\end{eqnarray*}
Since the latency of On-hold phase depends on the attractiveness of a task towards the crowds, whereas the latency of Processing phase depends on the actual cognitive load of a task, we assume these two phases are independent from each other, which is supported by a recent study~\cite{yan:mobisys10}. Therefore, the probability density function for the overall latency $L$ can be derive as following.
\begin{eqnarray*}
\begin{split}
f_{L}(t)&=\emph{pdf}(L\leq t) =f_{o}(t)*f_{p}(t)=\int_{0}^{t}\lambda_{o}\ue^{-\lambda_{o}(t-u)}\lambda_{p}\ue^{-\lambda_{p}u}\,du \\
\displaystyle &=\frac{\lambda_{o}\lambda_{p}}{\lambda_{o}-\lambda_{p}}(\ue^{-\lambda_{p}t}-\ue^{-\lambda_{o}t})=\frac{\lambda_{o}\lambda_{p}}{\lambda_{o}-\lambda_{p}}(\ue^{-\lambda_{p}t}-\ue^{-\lambda_{o}t}),
\end{split}
\end{eqnarray*}
where ``$*$'' denotes the convolution operation of two pdf.

\subsubsection{Parallel Processing}
In order to complete tasks quickly, unrelated tasks will be
published simultaneously onto the crowdsourcing platforms.
Given a set of $k$ batch tasks, $B_k$, being processed by the HPU
simultaneously, the distribution for the overall latency of parallel processing is the maximum latency of all the tasks:
\begin{eqnarray*}\begin{split}
F_{para}(t)&=\emph{cdf}(L_{para}(B_{k})\leq t)=\prod_{i=1}^{k}\emph{cdf}(L(b_{i})\leq t)
\end{split}\end{eqnarray*} 


\begin{table}
\centering
\begin{footnotesize}
\caption{Summary of Notations}
\begin{tabular}{|c|c|}
  \hline
  Notation & Description \\
  \hline
  $t_{i}$ & an atomic task\\
  \hline
  $T_{N}$ & a set of atomic tasks with size $N$\\
  \hline
  $L(t_i)$ & latency of task $t_i$\\
  \hline
  $K$ & the maximum number of possible batches \\
  \hline
  $L_{o}^{b_{i}}$ & \emph{On-hold}  latency of batch $b_{i}$\\
  \hline
  $L_{p}^{b_{i}}$ & \emph{Processing} latency of batch $b_{i}$\\
  \hline
  $\lambda_{o}^{i}$ & the \emph{On-hold clock rate} of batch $b_{i}$\\
  \hline
  $\lambda_{p}^{i}$ & the \emph{Processing clock rate} of batch $b_{i}$\\
  \hline
  $B$ &  the total budget\\
  \hline
  $g_i$ & task group $i$, whose tasks are of $i$ repetitions\\
  \hline
  $p_{i}$ & payment for the task group $i$\\
  \hline
  $E(g_i)$ & the expected latency for the task group i\\
  \hline
\end{tabular}
\end{footnotesize}
\label{table:notation}
\end{table}

\begin{Example}
Revisiting the motivating examples of the introduction, we now have the machinery in place to discuss how we obtained the latencies shown in (Figure~\ref{fig:eg1} and (Figure~\ref{fig:eg2}.
The expectation of the longest task for the first example is
\begin{displaymath}
E[L]=\frac{1}{\lambda_{1}+\lambda_{2}}(1+2\frac{\lambda_{1}}{\lambda_{2}}+ \lambda_{2}(1+\frac{\lambda_{1}}{\lambda_{2}}+\frac{\lambda_{2}}{\lambda_{1}}))
\end{displaymath}
Based on Table~\ref{table:relation}, $E[case_{1}]=2.93(s)$ and
$E[case_{2}]=2.25(s)$, where the load-sensitive strategy is better.

A similar computation for the second example shows that the expected
latency becomes 3.5s and 2.7s respectively.
\end{Example}

\subsection{The HPU Running Parameters}\label{section:running_para}

The crowds workforce platform is always fluctuating, both in terms
of demographics and in population.  However, an exponential model
suffices as a good approximation. To support a robust tuning
strategy, we propose to statistically infer the parameters with
following two methodologies.

\subsubsection{Parameter Inference}

To infer the parameter $\lambda_o$, a ``probe'' program is
introduced, which publishes tasks with varying prices. The workers
who accept the task are simply required to make the submission as
soon as possible, so that the processing latency is small enough to
be neglected. Due to the specific application scenario, two
different inference methodologies could be adopted.

\textbf{Fixed Period}
The probe publishes sample tasks with the same type and price. After a fixed period $T_{0}$, the number of taken tasks as $N$ is observed.

\textbf{Random Period}
The probe publishes sample tasks with the same type and price at moment $t_{0}$. After $N$ tasks have been taken(or finished), track down the length of the period $T_{0}$ starting from $t_{0}$.

For both methodologies, under maximum likelihood estimate, the parameter $\lambda_o$ is given by $\hat{\lambda}_o=\frac{N}{T_{0}}$.
Proof of the correctness of the inference can be found in Appendix Section~\ref{section:infer}. Further advanced sampling-based inference can be found in \cite{basawa:book80}.
The clock rate for the processing phase $\lambda_p$ is estimated with similar manner. This time, tasks of a specific type are published and the clock rate overall latency is estimated as: $\hat{\lambda}=\frac{N}{T_{0}}.$
Then $\lambda_p$ is estimated as: $\lambda - \lambda_o$, where $\lambda_o$ is the estimation of On-hold clock rate.
\subsubsection{Linearity Hypothesis}\label{section:hypo}
Without loss of generality, within a certain time interval, the price $c$ and the clock rate for the On-hold phase $\lambda_{o}(c)$ observes relationship with certain linearity. To provide better enhancement of the tuning strategy, we propose a Linearity Conjecture as following, which is the supporting property for strategy in Section~\ref{section:homo}. (The concrete values of the linearity between $c$ and $\lambda_{o}(c)$ does not affect the design of tuning strategy.)
\begin{hypothesis}[Linearity]
There exists constant values $k$ and $b$, such that the rate $\lambda_{c}$ and price $c$ follows $\lambda_{o}(c)=k\cdot c+b$.
\end{hypothesis}
The experiment part gives an empirically justifies this conjecture.

\section{Tuning Strategies}\label{section:3tuning}
In this section, the \emph{H-Tuning} problem is defined in the first place. Then the tuning strategies are developed according to three different scenarios.
\subsection{Problem Definition}\label{section:problem}
\begin{definition}[Latency Target]
A \emph{Latency Target} $L^{*}$ is a stochastic objective function for the tuning problem.
\end{definition}
Specific instantiation of $L^{*}$ will be presented in each scenario.

\begin{definition}[H-Tuning Problem]
Given a set of atomic tasks $T=\{t_{1},t_{2},\ldots,t_{N}\}$  with
size $N$, a discrete budget $B$, find an optimal budget allocation
strategy so that \emph{Latency Target} $L^{*}$ is minimized, without
exceeding the budget $B$.
\end{definition}

\subsection{Scenario \uppercase\expandafter{\romannumeral1} - Homogeneity}\label{section:homo}
\vspace{-0.5em}
\subsubsection{Scenario Description}

Scenario I is the most fundamental case. In this scenario, the
system is provided with a set of identical (in terms of difficulty) atomic tasks, which require the same number of running repetitions. 
All these atomic tasks is published simultaneously, and completed when all the tasks are solved for the required repetitions. 
A fixed budget is given at the very beginning
and the system is to come up with the budget allocation for each
atomic task before publishing them to the platform. The budget allocation is made to minimize the expected latency of all the
atomic tasks being solved.

\subsubsection{Tuning Strategy for Scenario \uppercase\expandafter{\romannumeral1}}

\nop{As is described in the previous section, Here, Now, we will
firstly illustrate the expected latency and then present the optimal
solution for budget allocation.}

The overall latency of all tasks being solved is equivalent to the maximum value of every single task's latency. 
Specifically, this is defined as
$L^{*}=L(T)=\max\left\{L(t_{i})|i=1,2,\ldots,N\right\}$. As is stated
earlier, the latency for each repetition is composed of two phases: the on-hold phase (Phase 1) and the
processing phase (Phase 2). The latency of both phases follows an
exponential distribution with parameters of $\lambda^{o}$ and
$\lambda^{p}$. 
The value of $\lambda^{o}$ is determined by the
allocated payment with a constant market condition, and the value
of $\lambda^{p}$ is determined simply by the nature. While our objective is to minimize the 
overall latency, the budget allocation does not affect the processing latency. 
Because of the identical nature of the processing time for all the tasks, the minimization of the on-hold 
latency leads to the minimum latency as well. Therefore, for Scenario \uppercase\expandafter{\romannumeral1}, the objective is changed to the 
minimization of the expected latency of the on-hold phase.
In the remaining part of this section, unless otherwise specified,
we use the term ``expected latency'' referring to the expected
latency in Phase 1. Before giving the optimal solution of the budget
allocation problem for Scenario
\uppercase\expandafter{\romannumeral1}, we introduce the following
Lemmas and Theorems.
\vspace{-0.5em}
\begin{lemma}\label{LEMMA1}
Given two identical atomic tasks $t_{1}$ and $t_{2}$, both requiring
to be run one round, a fixed budget of $B$ unit payment, allocating
both $t_{1}$ and $t_{2}$ with $\frac{B}{2}$ (or if $B$ is odd,
allocating these two atomic tasks with $\left \lfloor \frac{B}{2}
\right \rfloor$ and $\left \lfloor \frac{B}{2} \right \rfloor+1)$
unit payments leads to the minimum expected latency of completing
$t_{1}$ and  $t_{2}$.
\end{lemma}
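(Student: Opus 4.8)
The plan is to reduce the statement, via the Scenario~\uppercase\expandafter{\romannumeral1} simplification already established above the lemma, to an optimization over the on-hold phase alone, and then to solve that optimization exactly. Write the budget split as $c_{1}+c_{2}=B$, so that by the Linearity Hypothesis the two on-hold clock rates are $\lambda_{1}=k c_{1}+b$ and $\lambda_{2}=k c_{2}+b$, and the per-task latencies $L(t_{1}),L(t_{2})$ are independent exponentials with these rates. Since completion requires both tasks to finish, the latency target is $L^{*}=\max\{L(t_{1}),L(t_{2})\}$.

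First I would compute the expected maximum of two independent exponentials. Using $E[\max]=\int_{0}^{\infty}\Pr[\max>t]\,\ud t$ together with $\Pr[\max\leq t]=(1-\ue^{-\lambda_{1}t})(1-\ue^{-\lambda_{2}t})$, this gives
\begin{equation*}
E[\max\{L(t_{1}),L(t_{2})\}]=\frac{1}{\lambda_{1}}+\frac{1}{\lambda_{2}}-\frac{1}{\lambda_{1}+\lambda_{2}}.
\end{equation*}
The key structural observation is that the budget constraint fixes the \emph{sum} of the rates: $\lambda_{1}+\lambda_{2}=k(c_{1}+c_{2})+2b=kB+2b$, a constant independent of how the budget is split. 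Hence the last term $\tfrac{1}{\lambda_{1}+\lambda_{2}}$ is constant, and minimizing the expected latency is equivalent to minimizing $\tfrac{1}{\lambda_{1}}+\tfrac{1}{\lambda_{2}}=\tfrac{\lambda_{1}+\lambda_{2}}{\lambda_{1}\lambda_{2}}$. With the numerator fixed, this in turn is equivalent to \emph{maximizing} the product $\lambda_{1}\lambda_{2}$.

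The continuous optimum then follows from AM--GM (equivalently, from the concavity of $c_{1}\mapsto(kc_{1}+b)(k(B-c_{1})+b)$, a downward parabola in $c_{1}$): with $\lambda_{1}+\lambda_{2}$ fixed, $\lambda_{1}\lambda_{2}$ is maximized precisely when $\lambda_{1}=\lambda_{2}$, i.e. $c_{1}=c_{2}=B/2$. For the discrete (integer-granularity) case I would invoke the same parabola: it is strictly concave and symmetric about its vertex $c_{1}=B/2$, so among integer allocations the product $\lambda_{1}\lambda_{2}$ is largest at the integer(s) nearest the vertex. When $B$ is even this is $c_{1}=c_{2}=B/2$; when $B$ is odd the two nearest integers are $\lfloor B/2\rfloor$ and $\lfloor B/2\rfloor+1$, which is exactly the claimed allocation.

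The point needing the most care is the reduction asserted in the preceding text, namely that minimizing the on-hold latency minimizes the overall latency. Strictly, $L(t_{i})=L_{o}^{i}+L_{p}^{i}$ is a convolution, and the $\max$ of the two sums does not decompose cleanly; the justification rests on the processing rate $\lambda_{p}$ being identical across the two tasks and independent of the allocation, so that the only allocation-dependent part is confined to the on-hold phase. I would treat this reduction as given and phrase the proof around the on-hold rates $\lambda_{1},\lambda_{2}$; the remaining work --- the expected-max formula, the constant-sum observation, and the concavity/rounding argument --- is then routine.
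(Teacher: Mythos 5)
Your proof is correct and takes essentially the same route as the paper's: both derive the expected maximum of two independent exponentials, $E[\max]=\tfrac{1}{\lambda_{1}}+\tfrac{1}{\lambda_{2}}-\tfrac{1}{\lambda_{1}+\lambda_{2}}$, parametrize the rates through the Linearity Hypothesis, restrict attention to the on-hold phase, and conclude that the even split minimizes the expectation (the paper finishes by asserting convexity of the resulting function of $x$ with minimum at $x=B/2$, which is equivalent to your constant-sum/AM--GM argument). Your write-up is in fact slightly more careful than the paper's --- you retain the intercept $b$ where the paper silently sets $\lambda^{o}=kx$, you justify the odd-$B$ rounding via the symmetric concave parabola, and you explicitly flag the phase-1 reduction that the paper also takes as given --- but none of this changes the substance of the argument.
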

\vspace{-0.5em}
\begin{proof} Please refer to Appendix Section~\ref{section:proof_Lemma1}
\end{proof}

Then, Lemma $2$  shows that for one atomic task with multiple
repetitions, allocating budget evenly to each repetition
will minimize the expected latency.
\vspace{-0.5em}
\begin{lemma}\label{LEMMA2}
For atomic task $t$ which needs to be run $m$ repetitions, and
a fixed budget of $B$ unit payment, allocating $B$ evenly to each
repetition of $t$ leads to the minimum expected latency.
\end{lemma}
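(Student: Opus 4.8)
The plan is to reduce the $m$-repetition problem to the two-task case of Lemma~\ref{LEMMA1} by a pairwise exchange argument, using the fact that the on-hold rates are affine in price (Linearity Hypothesis), so that a fixed budget $\sum_i c_i = B$ is equivalent to a fixed rate sum $\sum_i \lambda_i = kB + mb$. Since all $m$ repetitions are published in parallel and the task completes only when the slowest one returns, the latency target is $L^{*} = \max\{X_1,\dots,X_m\}$, where $X_i$ is the on-hold latency of the $i$-th repetition, independent and exponential with rate $\lambda_i = k c_i + b$. The goal is to show that $E[\max_i X_i]$, as a function of the allocation subject to the fixed rate sum, is minimized when every $\lambda_i$ is equal, i.e.\ when $c_i = B/m$ (and, for indivisible $B$, when the integer allocation is as balanced as possible).

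First I would fix an arbitrary allocation and suppose two repetitions, say $1$ and $2$, receive unequal payments $c_1 \ne c_2$. Freezing the other $m-2$ repetitions and writing $Y = \max\{X_3,\dots,X_m\}$, which is independent of $X_1,X_2$, I would express the expected maximum through the tail integral $E[\max\{X_1,X_2,Y\}] = \int_0^\infty \bigl[1 - (1-\ue^{-\lambda_1 t})(1-\ue^{-\lambda_2 t}) F_Y(t)\bigr]\,dt$, where $F_Y$ is the cdf of $Y$. Because $F_Y(t)\ge 0$, minimizing the left side over splits of the combined budget $c_1+c_2$ (hence of the combined rate $\lambda_1+\lambda_2$) is equivalent to maximizing, pointwise in $t$, the product $(1-\ue^{-\lambda_1 t})(1-\ue^{-\lambda_2 t})$. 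Expanding this product, the only term depending on the split is $-(\ue^{-\lambda_1 t}+\ue^{-\lambda_2 t})$; since $x\mapsto \ue^{-xt}$ is strictly convex, this sum is minimized (so the product is maximized) exactly when $\lambda_1=\lambda_2$. Thus equalizing the two rates improves the integrand at every $t>0$ and strictly decreases $E[\max\{X_1,X_2,Y\}]$.

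This step is precisely the conditional strengthening of Lemma~\ref{LEMMA1}: that lemma is the special case $F_Y\equiv 1$ (no other repetitions), and the same convexity computation carries the extra nonnegative factor $F_Y$ for free. I would then finish by a smoothing/exchange argument: starting from any non-even allocation, there always exist two repetitions whose payments differ, and equalizing them (moving toward $B/m$) strictly lowers the objective; iterating drives the allocation to the even one. For integer budgets, the identical pairwise move shows that no two repetitions can differ by more than one unit at the optimum, yielding the balanced floor/ceiling allocation exactly as in Lemma~\ref{LEMMA1}.

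The main obstacle I anticipate is that Lemma~\ref{LEMMA1} as stated covers only an isolated pair, so it cannot be invoked verbatim inside the larger maximum; I must re-run the convexity argument in the presence of the survival factor $F_Y$ of the frozen repetitions, and separately pin down global (not merely local) optimality of the even split. A cleaner, self-contained alternative -- which I would present if the exchange bookkeeping becomes cumbersome -- is to note directly that $\lambda\mapsto \log(1-\ue^{-\lambda t})$ has second derivative $-t^2\ue^{\lambda t}/(\ue^{\lambda t}-1)^2<0$ and is hence concave, so that $\sum_i \log(1-\ue^{-\lambda_i t}) = \log\prod_i(1-\ue^{-\lambda_i t})$ is Schur-concave and maximized at equal rates for every $t$; integrating the tail formula $E[\max_i X_i]=\int_0^\infty[1-\prod_i(1-\ue^{-\lambda_i t})]\,dt$ then gives the claim in one stroke, with the discreteness of payments handled exactly as above.
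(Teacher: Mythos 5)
Your argument is mathematically sound, but it minimizes the wrong objective for this lemma, because it rests on a model assumption the paper does not make. You take the $m$ repetitions to be published in parallel, so that the task latency is $\max\{X_1,\dots,X_m\}$. In this paper, however, repetitions of a single task are processed \emph{sequentially}: the related-work section states that ``one task calls for multiple answering repetitions, which are submitted one after another,'' and Lemma~\ref{LEMMA3} encodes exactly this by asserting that the latency of a task with $k$ repetitions is Erlang$(k,\lambda)$, i.e.\ a \emph{sum} of $k$ independent exponentials (a maximum of exponentials is not Erlang). Accordingly, the paper's own proof writes the objective as $E\{L(t)\}=\sum_{i=1}^{m}1/\lambda_{i}^{o}=\sum_{i=1}^{m}1/(kp_{i})$ and minimizes it subject to $\sum_{i}p_{i}=B$; this is a one-line convexity (AM--HM) argument, since $\sum_{i}1/(kp_{i})\geq m^{2}/(kB)$ with equality iff $p_{i}=B/m$ for all $i$. (The appendix states this inequality with the direction and the bound garbled, but that is plainly the intended argument.) Your Schur-concavity computation instead establishes that equal rates minimize $E[\max_i X_i]$, which is a statement about a different random variable; the fact that both problems happen to have the same optimizer does not make one proof a proof of the other.

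The mismatch also matters downstream: under your parallel reading, Lemma~\ref{LEMMA3} would be false and the Scenario~II analysis, which integrates against Erlang densities built on top of Lemma~\ref{LEMMA2}, would collapse. Your machinery is not wasted, though. Distinct \emph{tasks} (as opposed to repetitions of one task) genuinely do run in parallel in this paper, so your tail-integral plus Schur-concavity argument --- maximizing $\prod_i\bigl(1-e^{-\lambda_i t}\bigr)$ pointwise in $t$ under a fixed rate sum --- gives a clean, self-contained alternative proof of Lemma~\ref{LEMMA1} (your case $F_Y\equiv 1$) and immediately generalizes it to any number of parallel single-round tasks, which is stronger than what the paper's two-task computation yields. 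But for Lemma~\ref{LEMMA2} itself you should discard the max-based setup and simply minimize $\sum_{i=1}^{m}1/(kp_{i})$ by convexity.
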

\vspace{-0.5em}
\begin{proof}  Please refer to Appendix Section~\ref{section:proof_Lemma2}
\end{proof}

With the above two lemmas, we have Theorem $1$, which produces the budget allocation plan to 
minimize the expected latency.

\begin{theorem}\label{THEOREM1}

Given two identical atomic tasks which require to be run for the
same number of times and a fixed budget of $B$, allocating the budget evenly
to each repetition of all the atomic tasks leads to the minimum
expected latency.
\end{theorem}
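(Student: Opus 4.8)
The plan is to prove Theorem~\ref{THEOREM1} by a two-stage reduction that composes the two preceding lemmas: first fix how the budget is split \emph{between} the two tasks and settle the allocation \emph{within} each task, then optimize the split between the tasks. Write the allocation as $(a_1,\dots,a_m)$ for the repetitions of the first task and $(b_1,\dots,b_m)$ for the second, with $\sum_j a_j+\sum_j b_j=B$. Since a task's on-hold latency is the (sequential) sum of its per-repetition latencies, the two task latencies $S_1=\sum_j X_j$ and $S_2=\sum_j Y_j$ are independent, with $X_j\sim\mathrm{Exp}(\lambda_o(a_j))$ and $Y_j\sim\mathrm{Exp}(\lambda_o(b_j))$, and the target to minimize is $E[L^{*}]=E[\max(S_1,S_2)]$.

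For the inner stage I would fix the per-task totals $B_1=\sum_j a_j$ and $B_2=\sum_j b_j$ and apply Lemma~\ref{LEMMA2} to each task separately, concluding that each task splits its own budget evenly over its $m$ repetitions. The point that needs care is that Lemma~\ref{LEMMA2} minimizes only the \emph{expectation} of a single task's latency, whereas the objective couples the two tasks through the maximum; so I must argue that shrinking each $S_i$ can only help. I would do this by strengthening the observation behind Lemma~\ref{LEMMA2} to a stochastic statement: under the linearity hypothesis the stage rates $\lambda_o(a_j)=k a_j+b$ sum to the fixed value $kB_1+mb$, so the even split makes $S_1$ smallest in the increasing convex order among all hypoexponential sums with that fixed rate-sum; because $s\mapsto\max(s,S_2)$ is increasing and convex, even internal splitting lowers $E[\max(S_1,S_2)]$ regardless of $S_2$, and symmetrically for the second task.

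After the inner stage the problem collapses to a single scalar: choose $B_1,B_2$ with $B_1+B_2=B$, each task now being an $m$-stage Erlang-type variable with common stage rate $\lambda_o(B_i/m)$. Here I would invoke a generalization of Lemma~\ref{LEMMA1}: by the symmetry of $\max$ and the fact that each task's latency is stochastically decreasing and convex in its budget, $E[\max(S_1,S_2)]$ is a symmetric convex function of $(B_1,B_2)$ on the segment $B_1+B_2=B$, and therefore attains its minimum at the balanced point $B_1=B_2=B/2$. Composing the two stages yields the uniform allocation $B/(2m)$ to every repetition of both tasks, as claimed.

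I expect the between-task step to be the main obstacle. Lemma~\ref{LEMMA1} was established only for single-repetition tasks, where each latency is one exponential and the closed form $E[\max(X_1,X_2)]=1/\mu_1+1/\mu_2-1/(\mu_1+\mu_2)$ makes the balancing transparent; for $m>1$ the task latencies are sums of exponentials and no such clean identity is available. The difficulty concentrates in the decomposition $E[\max(S_1,S_2)]=\tfrac12\big(E[S_1]+E[S_2]+E\,|S_1-S_2|\big)$: the first term is immediately minimized at $B_1=B_2$ by convexity of $c\mapsto 1/\lambda_o(c)$, but the dispersion term $E\,|S_1-S_2|$ must also be controlled, and it is precisely verifying that the balanced allocation does not sacrifice on this term (equivalently, the joint convexity of the objective in the full allocation vector) that requires the convex-order machinery sketched above.
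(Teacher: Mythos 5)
Your proposal is correct, but it takes a genuinely different route from the paper's own proof. The paper argues by induction on the number of repetitions: the base case is Lemma~\ref{LEMMA1}; in the inductive step it posits a supposedly better allocation giving the two tasks per-repetition payments $x+\varepsilon$ and $x-\varepsilon$, applies the identity $E[\max]=E[S_1]+E[S_2]-E[\min]$, and then compares only the sum-of-means terms via $\frac{1}{x-\varepsilon}+\frac{1}{x+\varepsilon}\geq\frac{2}{x}$. It never controls the $E[\min]$ term --- which is exactly the dispersion term $E|S_1-S_2|$ you isolate as the crux --- and it assumes without justification that any competing allocation is uniform within each task; your two-stage design is built precisely to handle these two points, so your plan is in fact tighter than the published argument (which, incidentally, also states the key inequality with a reversed sign). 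Both technical claims you leave open are true and can be closed. For the inner stage, the majorization fact holds even in the usual stochastic order, which is stronger than what you need: reduce to a pairwise transfer, where a two-stage hypoexponential with rates $c\pm d$ has survival function $e^{-ct}\left(c\,\sinh(dt)/d+\cosh(dt)\right)$, increasing in $d$; independence then carries the order through the convolution with the remaining stages and through $\max(\cdot,S_2)$, monotonicity alone sufficing. For the outer stage, the joint convexity you worry about follows cleanly from a scaling representation rather than abstract stochastic-convexity theory: with $Z_i\sim\mathrm{Erl}(m,1)$ independent, $S_i \stackrel{d}{=} Z_i/\mu(B_i)$ with $\mu$ affine, so for each fixed $(Z_1,Z_2)$ the quantity $\max\left(Z_1/\mu(B_1),\,Z_2/\mu(B_2)\right)$ is a pointwise maximum of convex functions of $(B_1,B_2)$ and hence jointly convex; taking expectation preserves convexity, and symmetry plus convexity then forces the minimum at $B_1=B_2=B/2$. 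In short, your route costs heavier machinery than the paper's elementary induction, but unlike the paper's proof it assembles into a complete and rigorous argument.
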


\begin{proof}  Please refer to Appendix
Section~\ref{section:proof_even_allocation}.
\end{proof}

\label{THEOREM1} directly leads to the optimal budget plan, whose operations are shown in Algorithm
\ref{alg:EA}. As the optimal solution is obtained analytically, EA is conducted with $O(1)$ 
time complexity.

\begin{algorithm}[!ht]
  \small{
  \caption{\small{Even Allocation (EA)}}
  \label{alg:EA}
  \KwIn{budget $B$, atomic task set $T={t_{1},t_{2},\ldots,t_{N}}$, $m$ required repetition rounds}
  \KwOut{allocation of payment $P={p_{1}, p_{2},\ldots, p_{N}}$}
  \eIf{$B\leq m*N$}{
              \Return the budget is not enough\;
            }{
              $\delta=\left \lfloor B/mN\right \rfloor$ and each repetition of all the atomic tasks is allocated with $\delta$ unit payment\;
              $\gamma=\left \lfloor(B ~mod~ mN)/N\right \rfloor$ and select $\gamma$ repetitions from each atomic task randomly.  Increase the payment for the selected rrepetitions by one unit\;
              $\sigma=(B ~mod~ mN)~mod~ N$ and select $\sigma$ repetitions from $\sigma$ random atomic tasks whose payment is not increased in the previous step. Increase the payment of the selected repetition rounds by one unit\;
              }
              }
\end{algorithm}

\vspace{-0.5em}
\subsection{Scenario \uppercase\expandafter{\romannumeral2} - Repetition}\label{section:rep}
In this section, we take one more step forward: despite the identical difficulty, the tasks require different running repetitions.

\subsubsection{Getting the Expected Latency}
As the tasks require different number of running repetitions, the closed form of overall latency's pdf will become intractable when tasks come with large quantity. Thus, it's impossible to get the deterministic optimal solution. To address this challenge, the overall latency is processed approximately, based on which the optimal budget plan is derived.
Specifically, tasks are grouped according to the running repetitions. Then the overall latency is approximated with the sum of latency of all the task groups.

\textbf{Group of Single Round} Given task group $g$ which is composed of atomic tasks $t_{1}$,$t_{2}$,$\ldots$,$t_{n}$, requiring to be run for single round. According to the definition, the latency of $g$, which is denoted by $L(g)$, equals to $\max{(L(t_{1}),L(t_{2}),\ldots,L(t_{n}))}$. Let \\ $x_{1}=\min{\{(L(t_{1}),L(t_{2}),\ldots,L(t_{n}))\}}$,
which means the first completion of all the tasks within the group, and then let \\
$x_{2}=\min{\{\{(L(t_{1}),L(t_{2}),\ldots,L(t_{n}))\}-x_{1}\}}$
 which means the second completion of all the atomic tasks within the group, and the like, $x_{n}=\min{L(g)-\sum_{i=1}^{n-1}x_{i}}$, which means the last completion of all the atomic tasks.
It can be derived that $L(g)=\sum_{i=1}^{i=n}x_{i}$.
As $x_{i}\sim exp(\lambda \ast i)$, $L(g)$ can be regarded as the sum of $n$ exponential variables. Therefore, $L(g)=\sum_{i=1}^{n}\frac{1}{\lambda \ast i}$.

\textbf{Group Multiple Rounds} Before we turn to the study of the
probabilistic model of the task group of multiple repetition rounds,
the following lemma is needed to show the probabilistic property of
the task which requires multiple running repetitions.

\begin{lemma}
Let $t$ denote an atomic task which needs to be run for $k$
repetition rounds, the latency of $t$ follows Erlang distribution of
parameter $k$ and $\lambda$, which is $L(t)\sim Erl\left
\{k,\lambda\right \}$ \label{LEMMA3}
\end{lemma}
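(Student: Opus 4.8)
The plan is to exploit the sequential structure of the $k$ repetition rounds. Because a single atomic task requiring $k$ repetitions is answered one round at a time (the repetitions are submitted one after another rather than in parallel), its total latency decomposes as $L(t)=\sum_{j=1}^{k}Y_{j}$, where $Y_{j}$ is the latency contributed by the $j$-th round. Following the single-round model used throughout this section, each $Y_{j}$ is exponentially distributed with the common rate $\lambda$, and distinct rounds are mutually independent. The claim therefore reduces to the classical fact that a sum of $k$ i.i.d.\ $\mathrm{Exp}(\lambda)$ variables is distributed as $\mathrm{Erl}\{k,\lambda\}$, which I would establish directly.

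First I would fix the target density, namely the Erlang density with shape $k$ and rate $\lambda$, $f_{k}(t)=\frac{\lambda^{k}t^{k-1}\ue^{-\lambda t}}{(k-1)!}$ for $t\geq 0$, and then argue by induction on $k$. The base case $k=1$ is immediate, since $f_{1}(t)=\lambda\ue^{-\lambda t}$ is exactly the single-round exponential density. For the inductive step, assuming $\sum_{j=1}^{k-1}Y_{j}$ has density $f_{k-1}$, I would add one more independent round $Y_{k}\sim\mathrm{Exp}(\lambda)$ and obtain the density of the sum by the same convolution operator used earlier for the overall-latency pdf:
\begin{eqnarray*}
\begin{split}
f_{k}(t)&=\int_{0}^{t}f_{k-1}(t-u)\,\lambda\ue^{-\lambda u}\,\ud u=\int_{0}^{t}\frac{\lambda^{k-1}(t-u)^{k-2}\ue^{-\lambda(t-u)}}{(k-2)!}\,\lambda\ue^{-\lambda u}\,\ud u \\
&=\frac{\lambda^{k}\ue^{-\lambda t}}{(k-2)!}\int_{0}^{t}(t-u)^{k-2}\,\ud u=\frac{\lambda^{k}t^{k-1}\ue^{-\lambda t}}{(k-1)!},
\end{split}
\end{eqnarray*}
which is precisely $f_{k}$, closing the induction. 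A shorter alternative I would mention is the transform argument: each round has moment generating function $\lambda/(\lambda-s)$, so the sum has $\left(\lambda/(\lambda-s)\right)^{k}$, which uniquely identifies the $\mathrm{Erl}\{k,\lambda\}$ law.

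I do not anticipate any real difficulty in the probabilistic calculation itself, since the convolution (equivalently, the generating-function) argument is textbook. The one step that genuinely requires care---and that I would state explicitly rather than assume silently---is the modeling justification that the per-round latencies $Y_{1},\ldots,Y_{k}$ are independent and share a common exponential rate $\lambda$: independence follows because separate rounds are picked up by independently arriving workers, and the common rate follows from allocating equal payment to each round of the task. Once this structural fact is granted, the Erlang conclusion is forced and the lemma follows.
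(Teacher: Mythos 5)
Your proof is correct and takes essentially the same approach as the paper's: both decompose the task's latency into the sum of the $k$ sequential per-round latencies, note that each round is exponential with the common rate $\lambda$ (justified by the equal per-repetition payment), and conclude that the sum is $Erl\{k,\lambda\}$. The only difference is that the paper simply invokes the sum-of-i.i.d.-exponentials characterization as ``meeting the requirement'' of the Erlang distribution, whereas you verify that textbook fact explicitly by convolution induction (with an MGF alternative), which is added detail rather than a different argument.
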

\begin{proof} Please refer to Appendix Section~\ref{section:proof_Lemma3}.
\end{proof}
\vspace{-0.5em}
Now we can get the expected latency of the task group through the
following deduction. Suppose we are given a task group $g$, which is
composed of a set of tasks $\{t_{1}, t_{2},\ldots, t_{n}\}$
and each task is needed to be run for $k$ repetition rounds.
Let $L\{g\}$ denote the latency of the task group $g$, then we can
have the following relationship: $L\{g\}=L\{max{\{x_{1}, x_{2},\ldots, x_{n}\}}\}$.
Let $F(t)$ denote the cumulative distribution function (cdf) and $f(t)$ denote the probability density function (pdf) of the latency of the atomic tasks respectively. Let $F_{g}(t)$ denote the cumulative distribution function (cdf) and $f_{g}(t)$ denote the probability density function (pdf) of the latency of the task group respectively. The following relationship can be derived:
\begin{eqnarray}
\nonumber &F_{g}(t)=F^{n}(t), \text{ }
\nonumber &f_{g}(t)=n*F^{n-1}(t)*f(t)
\end{eqnarray}
With the above relationship, we get get the expression of the expected latency of the task group as:
\begin{displaymath}
E\{L(g)\}=\int_{0}^{\infty}f_{g}(t)*tdt=\int_{0}^{\infty}n*F^{n-1}(t)*f(t)*tdt
\end{displaymath}
According to the conclusion of $lemma$ 3, the latency of the tasks within the task group follow Erlang distribution $Erl(k, \lambda)$, thus the expected latency of the task group is derived as:
\begin{displaymath}
E\{L(g)\}=\int_{0}^{\infty}n*F_{E}^{n-1}(k,\lambda,t)*f_{E}(k,\lambda,t)*tdt
\end{displaymath}
among which $F_{E}(k,\lambda,t)$ and $f_{E}(k,\lambda,t)$ denote the $cdf$ and $pdf$ of the Erlang distribution $Erl(k, \lambda)$.

\textbf{Approximate Expected Latency}
As is stated in the previously, the close form of expected latency is intractable when the number of tasks is huge. Thus, we use the sum of the expected latency of all the task groups to approximate the real function. There are two reasons for such approximation: one is that the sum of the expected latency of all the task group lays the upper bound of the expected latency of all the atomic tasks; the other is that the expected latency of all the atomic tasks will decrease while the sum of the expected latency of the task group is going down.

\subsubsection{Tuning Strategy for Scenario \uppercase\expandafter{\romannumeral2}}

Let $E_{g_{1}}, E_{g_{2}},\ldots, E_{g_{n}}$ denote the expected latency of the task group $g_{1}, g_{2},\ldots, $ $g_{n}$. Let $b_{g_{1}}, b_{g_{2}},\ldots, $ $b_{g_{n}}$ denote the allocated payment of task group $g_{1}, g_{2},\ldots,$ $g_{n}$. The optimizing problem is defined as
$\min \sum_{i=1}^{n}E_{g_{i}}$
$s.t. \sum_{i=1}^{n}b_{g_{i}}\leqslant B$.

A dynamic algorithm is designed as follows to solve such minimization problem. The outer loop of the algorithm increases the task payment from 1 to $B'$ ($B'=B-\sum_{i=1}^{n}u_{i}$). Within each loop, it takes $O(n)$ operations to find the optimal payment given the current budget. Apparently, the overall time complexity for algorithm 2 turns out to be $O(nB')$

\begin{algorithm}[!ht]
  \footnotesize{
  \caption{\small{Repetition Algorithm (RA)}}
  \label{alg:RA}
  \KwIn{budget $B$, task group $G={g_{1}, g_{2},\ldots, g_{n}}$}
  \KwOut{allocation of payment $P={p_{1}, p_{2},\ldots, p_{n}}$}

  \For {$i=1$ to $n$}
    {$p_{i}(0)=1$}
     $B'=B-\sum_{i=1}^{n}u_{i}$;\\
     $E_{0}(0)=\sum_{i=1}^{n}E_{i}(P_{i}(0))$;\\
  \For {$x=1$ to $B'$}
  {
  $E_{0}(x)=\min \{E_{0}(x-1),$ 
  $\min{ \{E_{0}(x-u_{i})-[E_{i}(p_{i})-E_{i}(p_{i}+1)]|u_{i}\leq x\} }  \} $\;
  \If {$E_{0}(x-1) \leq \{E_{0}(x-u_{i})-[E_{i}(p_{i})-E_{i}(p_{i}+1)]|u_{i}\leq x\}$ }
{$\theta=\underset{i}{\operatorname{argmin}} \{E_{0}(x-u_{i})-[E_{i}(p_{i})-E_{i}(p_{i}+1)]|u_{i}\leq x\}$\;
  $p_{\theta}(x)=p_{\theta}(x-1)+1$;}

  }

  }
  $\forall i=1,\ldots,n, p_{i}=p_{i}(b)$
\end{algorithm}

\vspace{-1em}
\subsection{Scenario \uppercase\expandafter{\romannumeral3} - Heterogeneous}\label{section:heter}

In Scenario \uppercase\expandafter{\romannumeral3}, the tasks are heterogeneous (in terms of difficulty) and need to be run for different numbers of repetitions. While dealing with the latency of first two scenarios, we only take the latency of Phase $1$ into account, whose reasons are two fold: the first one is that the payment does not change the latency of Phase 2, the second one is that the latency of Phase $2$ is identical for all the atomic tasks since all the tasks are homogeneous in terms of task nature. However, these properties no longer hold in Scenario \uppercase\expandafter{\romannumeral3} as different tasks require different processing time. 

For this scenario, some tasks are easier to solve, which produce smaller processing latency, while others are harder to solve, which lead to longer processing latency. As a result of such character, the previous tunning strategies do not apply well to the current problem, as the tuning result may be jeopardized by the tasks whose processing latency is significantly larger than others'. One extreme situation is that the latency of Phase 2 of some atomic tasks is so long that the overall latency of completing all the atomic tasks will be approximately equal to the expected latency such atomic task. We call such kind of atomic tasks as ``most difficult task''. It is obvious that such type of atomic tasks generate stronger influence to the overall latency than the others.


In order to relieve the delaying effect caused by the ``most difficult tasks'', we make the following adaption to the tuning strategy. For previous tuning strategies, only the latency of Phase 1 is considered. While in Scenario \uppercase\expandafter{\romannumeral3}, two objectives will be minimized simultaneously: one objective is still the latency Phase 1, the other one is the latency of the ``most difficult task'', which is equivalent to the largest expected latency of all the atomic tasks. The reason of introducing the first objective is the same as previous scenarios, which is the allocation of payment only changes the latency of Phase 1, while the reason of introducing the second objective is to confine the delay effect caused by the ``most difficult task''. Here the second objective serves as the penalty function to avoid the appearance of the situation where the latency of some atomic tasks is significantly longer than that of others'. One more point needs to be clarified is that we can't simply minimize the second objective because the minimization of the latency of the ``most difficult work'' doesn't necessarily lead to minimum latency of completing all the atomic tasks.

Formally, the objective function is defined as follow.
Let $G=\{g_{1},g_{2},\ldots,g_{n}\}$ denote the task group (the grouping operation is performed to all the atomic tasks so that the tasks of identical type and repetition fall into the same group, which is slightly different from Scenario \uppercase\expandafter{\romannumeral2}). Let $L^{1}(g_{i})$ and $L^{2}(g_{i})$ denote the latency of Phase $1$ and Phase $2$ of $g_{i}$ respectively. Objective $1$ is the expected latency of Phase $1$ of all the atomic tasks, which is denoted by $O_{1}$ and $O_{1}=E\{L^{1}(G)\}$. Objective $2$ is the sum of the expected latency of Phase $1$ and Phase $2$ of the most difficult atomic tasks, which is denoted by $O_{2}$ and $O_{2}=$ \\
$max{ \{  E\{L^{1}(g_{i})\} + E\{L^{2}(g_{i})\}|i=1,\ldots,n  \}   }$. Given the budget of $B$ unit payment and let $p_{i}$ denote the payment allocated to group $g_{i}$, the optimizing problem is defined as:
$\min \{O_{1},O_{2}\}$ $s.t.$\\
$ \sum_{i=1}^{n}p_{i}\leq B$.
Here, we adopt a ``Compromise strategy'' to solve the above two objective optimization problems. Firstly, the ``Utopia Point'' ($UP$) is calculated, which refers to the point where both objectives are optimized independently under the given constraints. In the second place, the ``Closeness'' ($CL$) is defined as the first order distance between the objective point $OP$ and $UP$. The ``Closeness'' is minimized under the given constrains, and the corresponding solution will serve as the optimal solution.
The definition of ``$UP$'', ``$OP$'', and ``$CL$'' are formally presented as follows.
\vspace{-0.5em}
\begin{definition}[Utopia Point]
Let $O_{1}^{*}= \{ \min O_{1}:s.t. \\ \sum_{i=1}^{n}p_{i}\leq B$ \} and $O_{2}^{*}= \{ \min O_{2}:s.t.\sum_{i=1}^{n}p_{i}\leq B \}$. The Utopia Point is defined as $UP=(O_{1}^{*},O_{2}^{*})$.
\end{definition}
\vspace{-1.0em}
\begin{definition}[Objective Point]
Let $O_{1}$ and $O_{2}$ denote the objective value of the current payment allocated to each task group. The Objective Point is the two dimensional position determined by \{$O_{1}$,$O_{2}\}$.
\end{definition}
\vspace{-1.5em}
\begin{definition}[Closeness]
The Closeness equals to the first order distance between $UP$ and $OP$: $CL=\left \|OP-UP \right \|$.
\end{definition}

Here, the optimal budget plan is equivalent to the minimization of the following problem:
$\min {CL}: s.t.\sum_{i=1}^{n}p_{i}\leq B$. Such a problem can be optimally solved with dynamic programming, whose procedures are shown with Algorithm 3. Similar with algorithm 2, the dynamic programming runs with $O(nB')$ iterations to achieve the optimal solution.

\begin{algorithm}[!ht]
  \small{
  \caption{\small{Heterogeneous Algorithm (HA)}}
  \label{alg:HA}
  \KwIn{budget $B$, task group $G={g_{1}, g_{2},\ldots, g_{n}}$}
  \KwOut{allocation of payment $P={p_{1}, p_{2},\ldots, p_{n}}$}

  \For {$i=1$ to $n$}
    {$p_{i}=1$}
     $B'=B-\sum_{i=1}^{n}u_{i}$;\\
     $CL_{0}=\left \|OP_{0}-UP \right \|$;\\
  \For {$x=1$ to $B'$}
  {
  $CL_{x}=\min \{CL_{x-1},
  \min \{CL_{x-u_{i}}(++p_{i}(x-u_{i})) $\\
  $ |u_{i}\leq x,i=1,\ldots, n\}  \} $;\\

  \If {$CL_{x-1}  $\\$ \leq\{CL_{x-u_{i}}(++p_{i}(x-u_{i}))|u_{i}\leq x,i=1,\ldots, n\}$ }
{$\theta=\underset{i}{\operatorname{argmin}} \{CL_{x-u_{i}}(++p_{i}(x-u_{i}))|u_{i}\leq x,i=1,\ldots, n\}$;\\
  $p_{\theta}(x)=p_{\theta}(x-1)+1$;}

  }
  }
  $\forall i=1,\ldots,n, p_{i}=p_{i}(b)$
\end{algorithm}

\begin{figure*}[htbp] \centering
\subfigure[Homogeneous($\lambda=1+p$)] { \label{fig:1}
\includegraphics[height = 1.0in,width=0.63\columnwidth]{./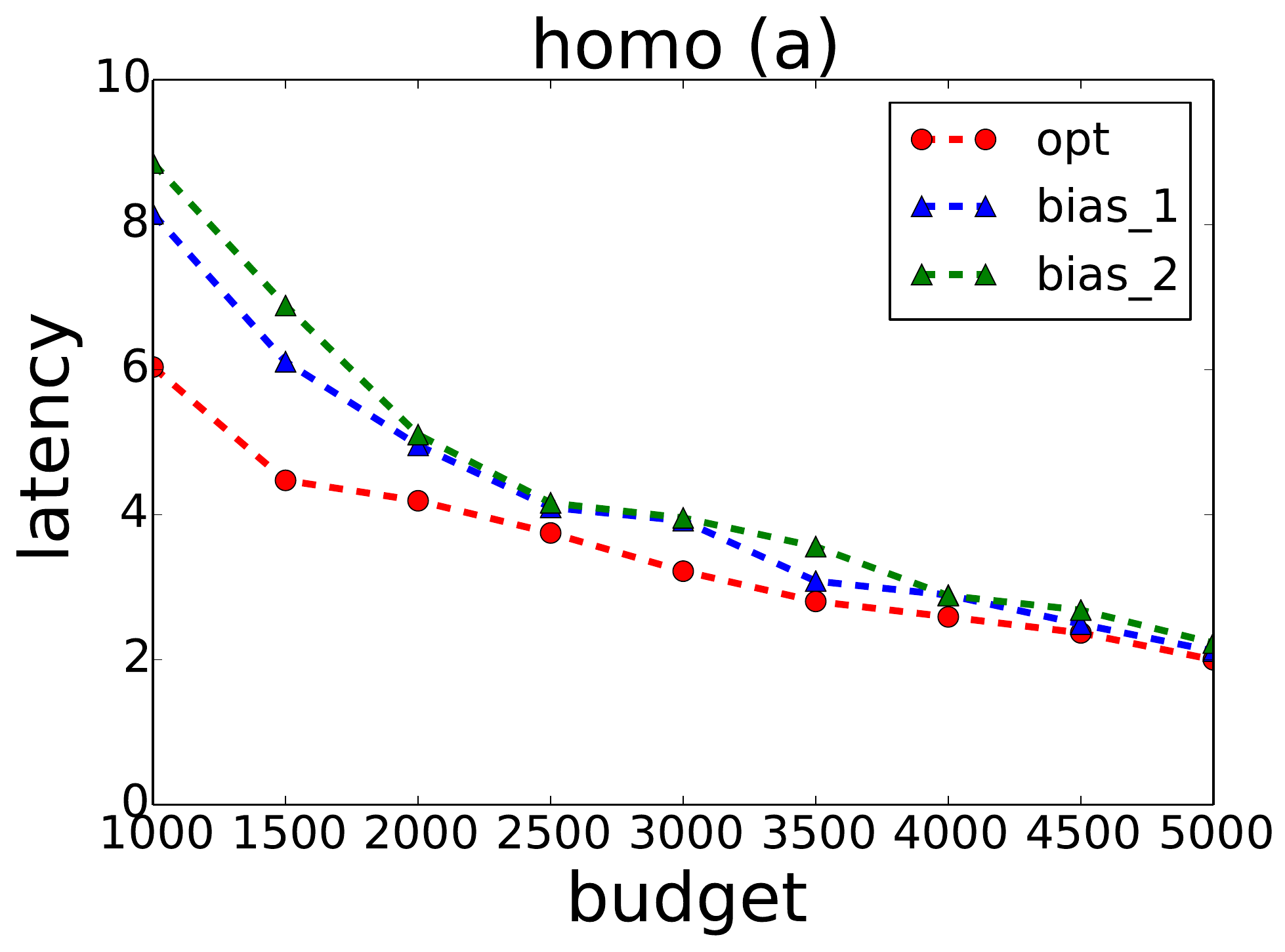}
}
\subfigure[Homogeneous($\lambda=10p+1$)] { \label{fig:2}
\includegraphics[height = 1.0in,width=0.63\columnwidth]{./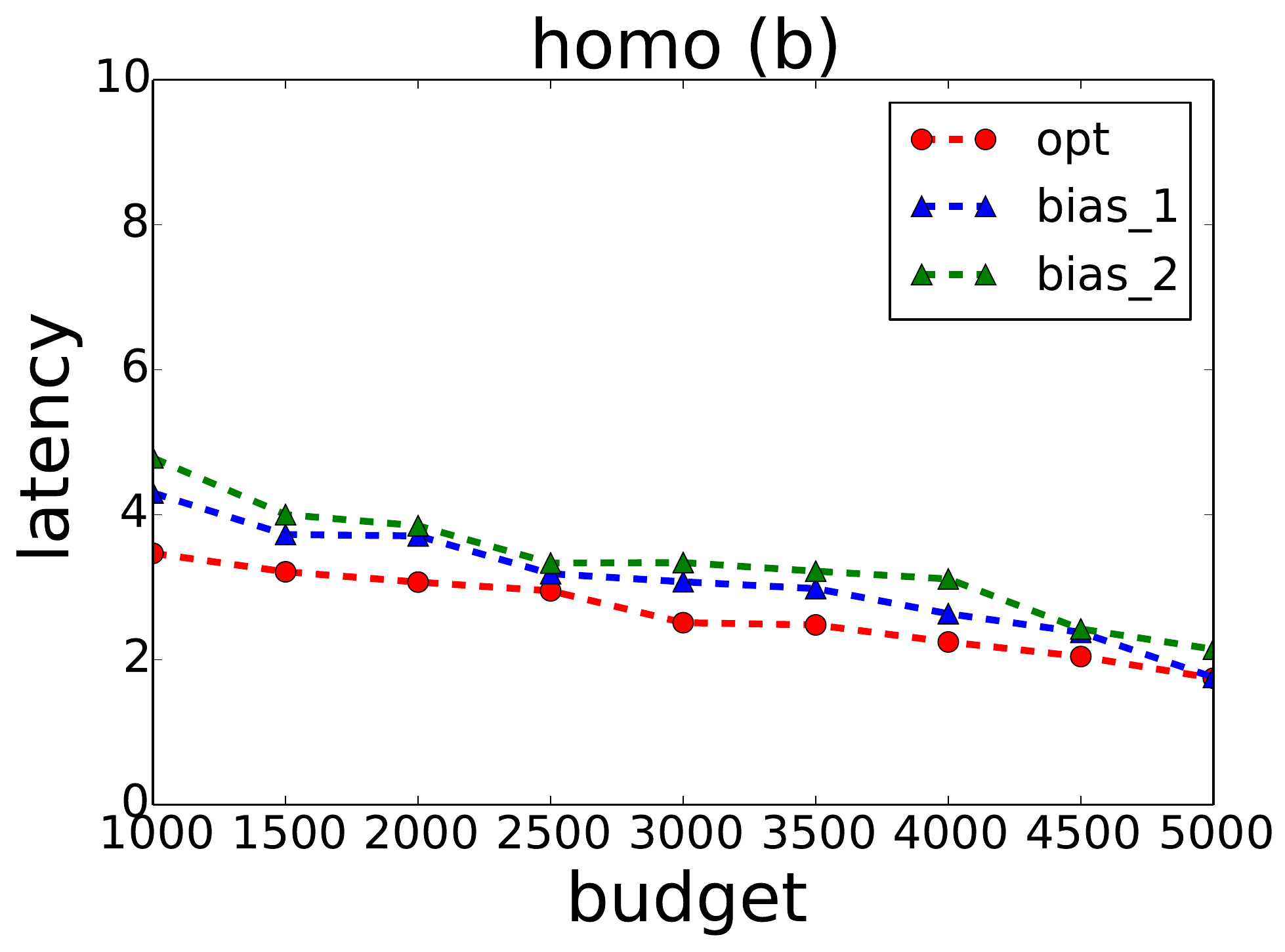}
}
\subfigure[Homogeneous($\lambda=0.1p+10$)] { \label{fig:3}
\includegraphics[height = 1.0in,width=0.63\columnwidth]{./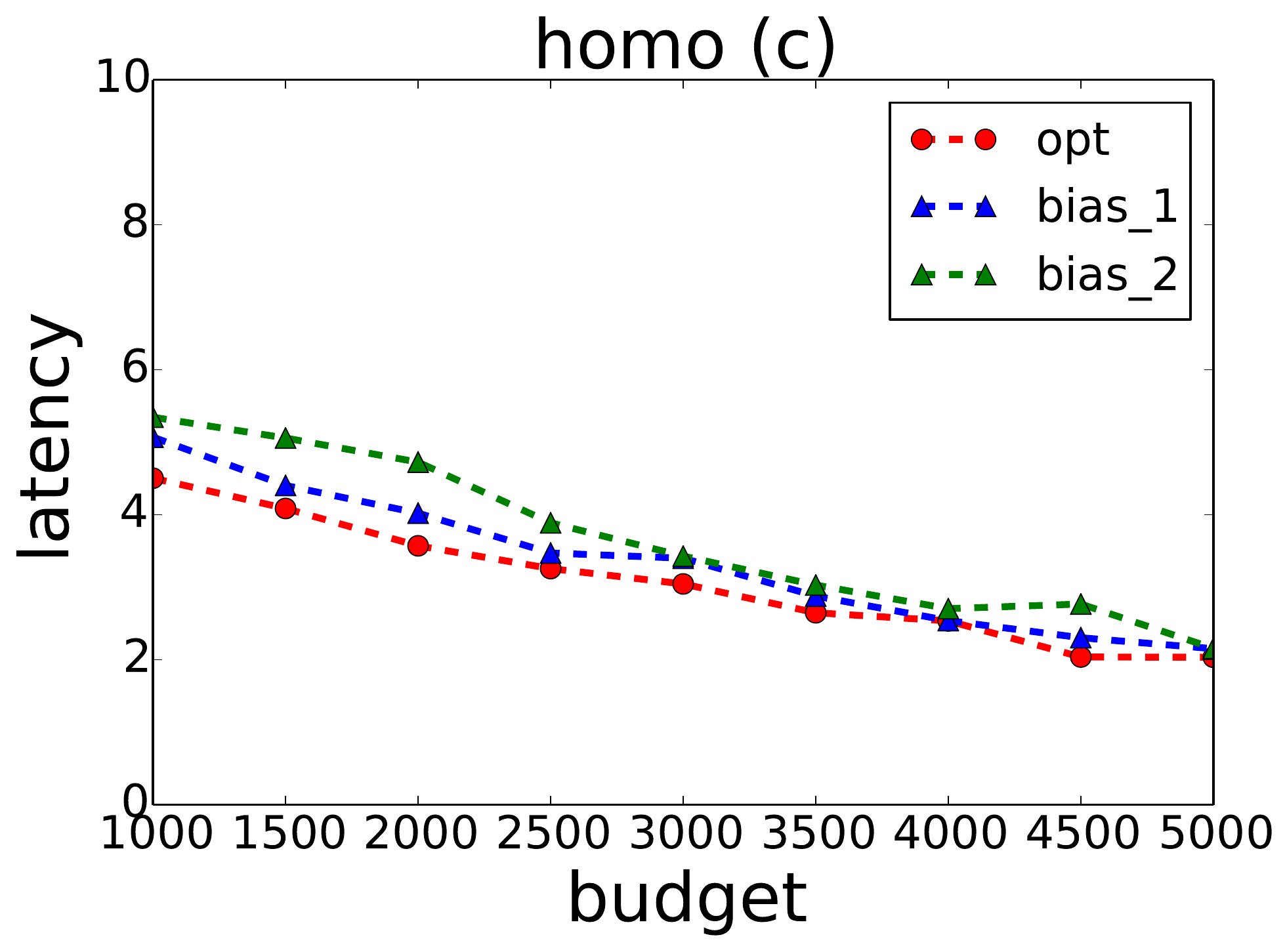}
}
\subfigure[Homogeneous($\lambda=3p+3$)] { \label{fig:4}
\includegraphics[height = 1.0in,width=0.63\columnwidth]{./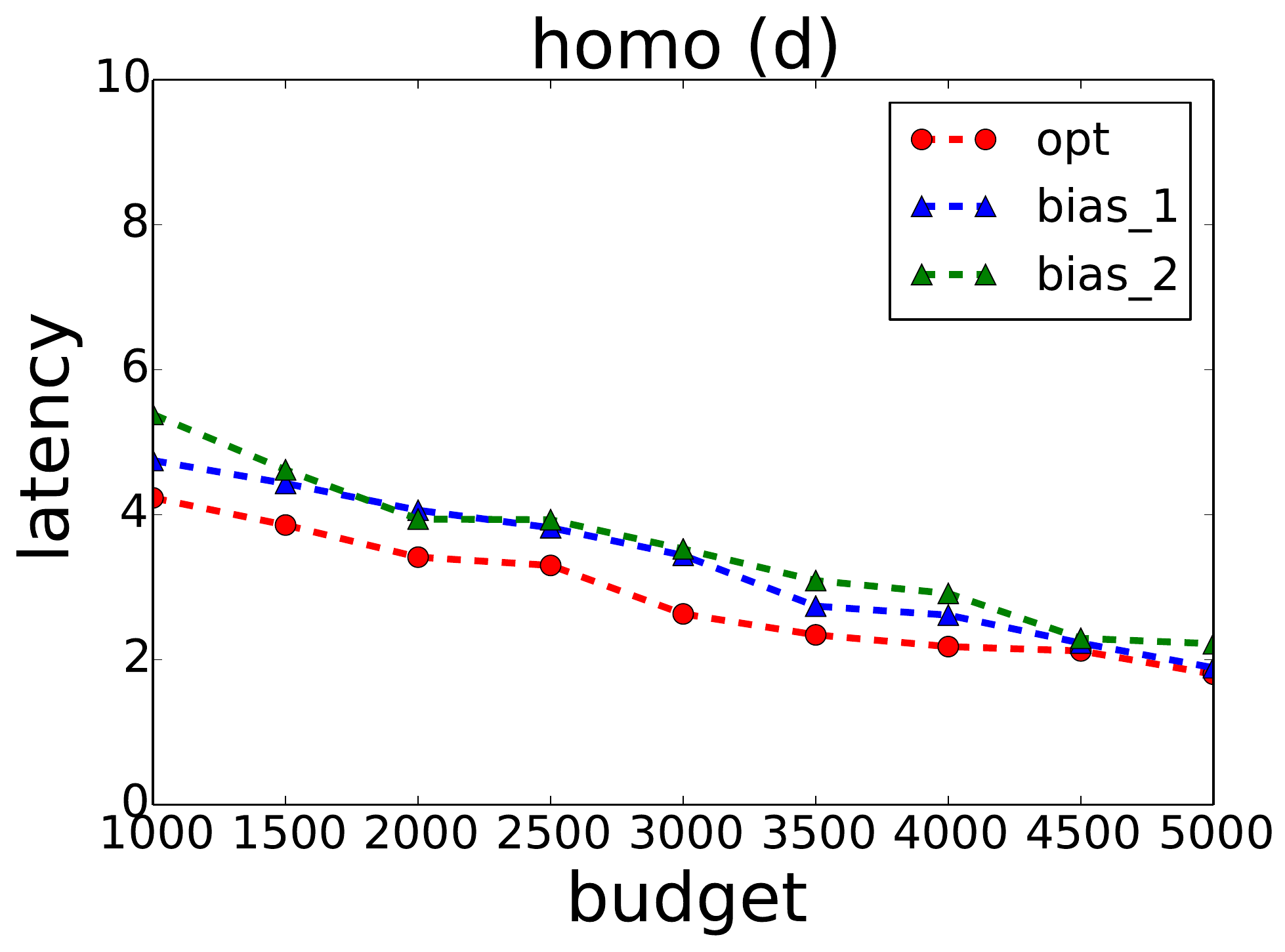}
}
\subfigure[Homogeneous($\lambda=1+p^{2}$)] { \label{fig:5}
\includegraphics[height = 1.0in,width=0.63\columnwidth]{./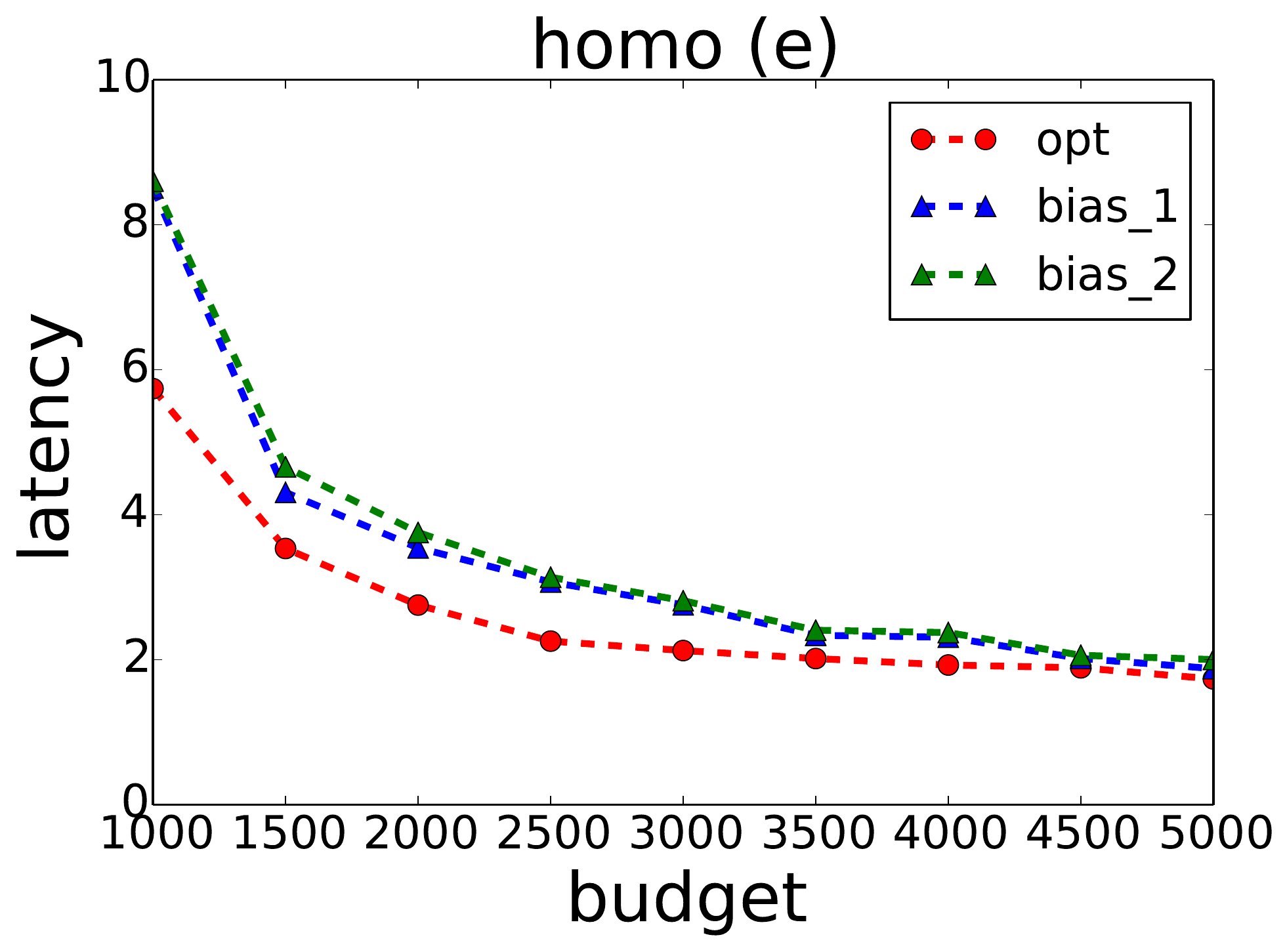}
}
\subfigure[Homogeneous($\lambda=\log{1+p}$)] { \label{fig:6}
\includegraphics[height = 1.0in,width=0.63\columnwidth]{./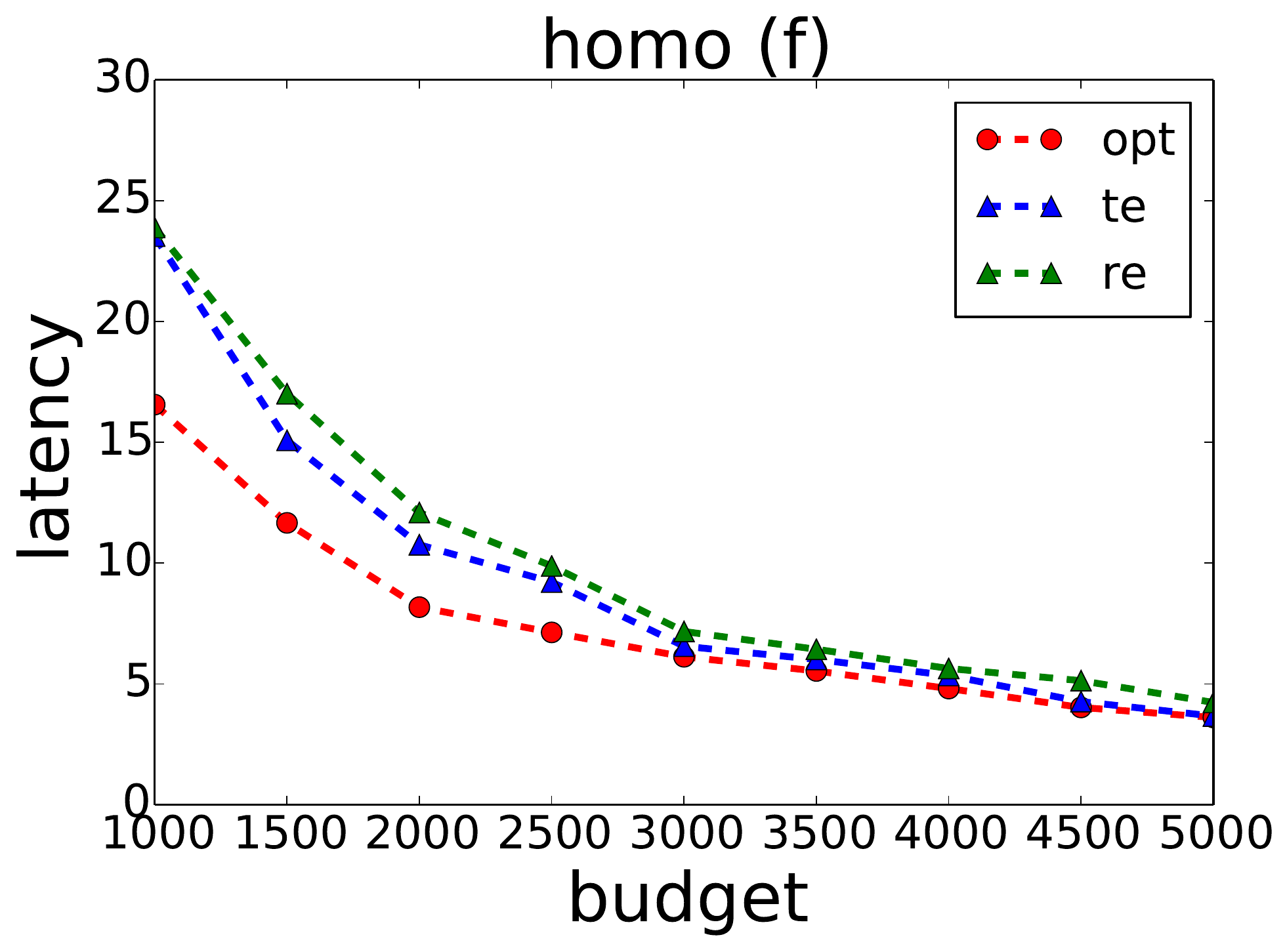}
}
\subfigure[Repetition($\lambda=1+p$)] { \label{fig:7}
\includegraphics[height = 1.0in,width=0.63\columnwidth]{./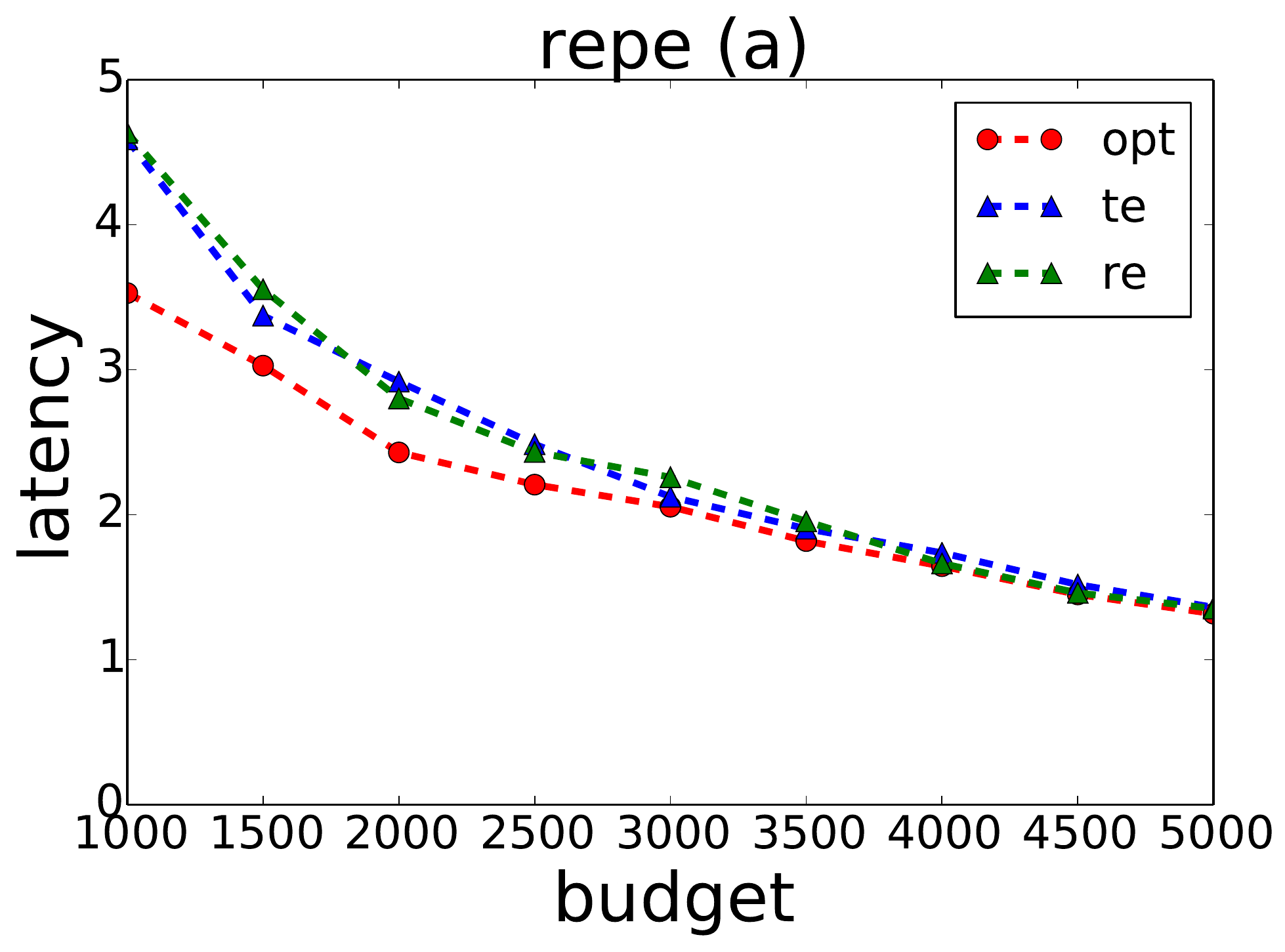}
}
\subfigure[Repetition($\lambda=10p+1$)] { \label{fig:8}
\includegraphics[height = 1.0in,width=0.63\columnwidth]{./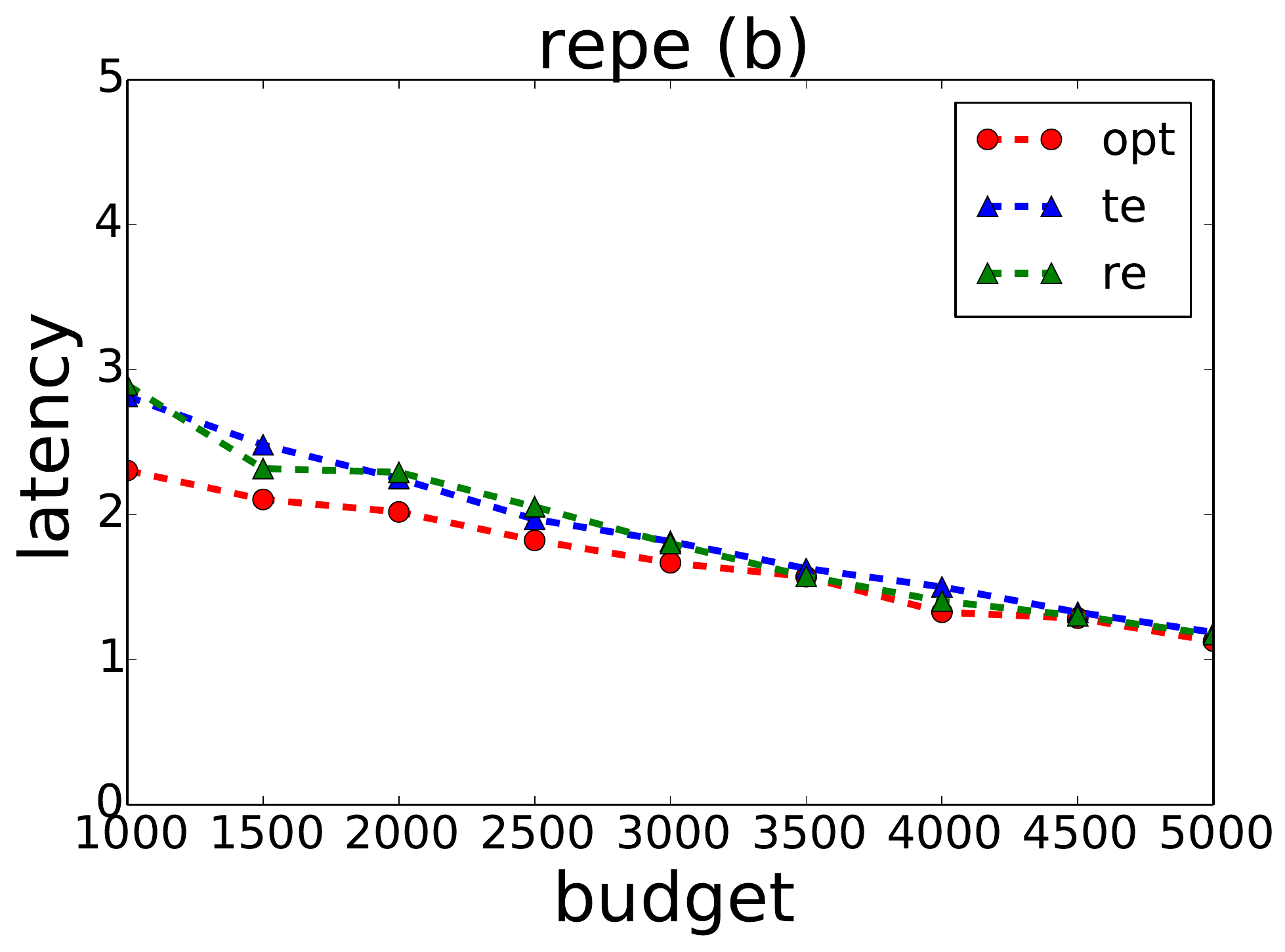}
}
\subfigure[Repetition($\lambda=0.1p+10$)] { \label{fig:9}
\includegraphics[height = 1.0in,width=0.63\columnwidth]{./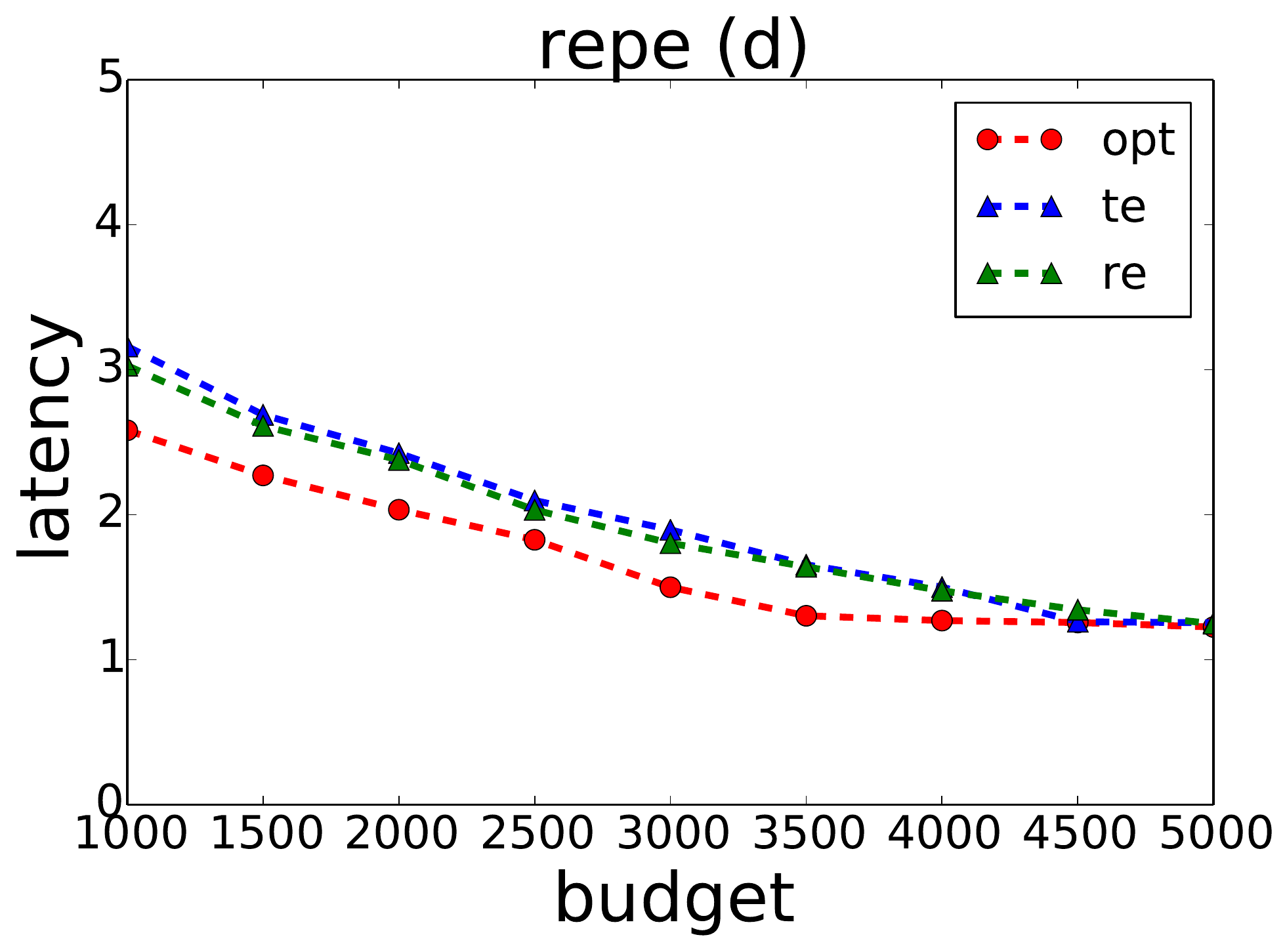}
}
\subfigure[Repetition($\lambda=3p+3$)] { \label{fig:10}
\includegraphics[height = 1.0in,width=0.63\columnwidth]{./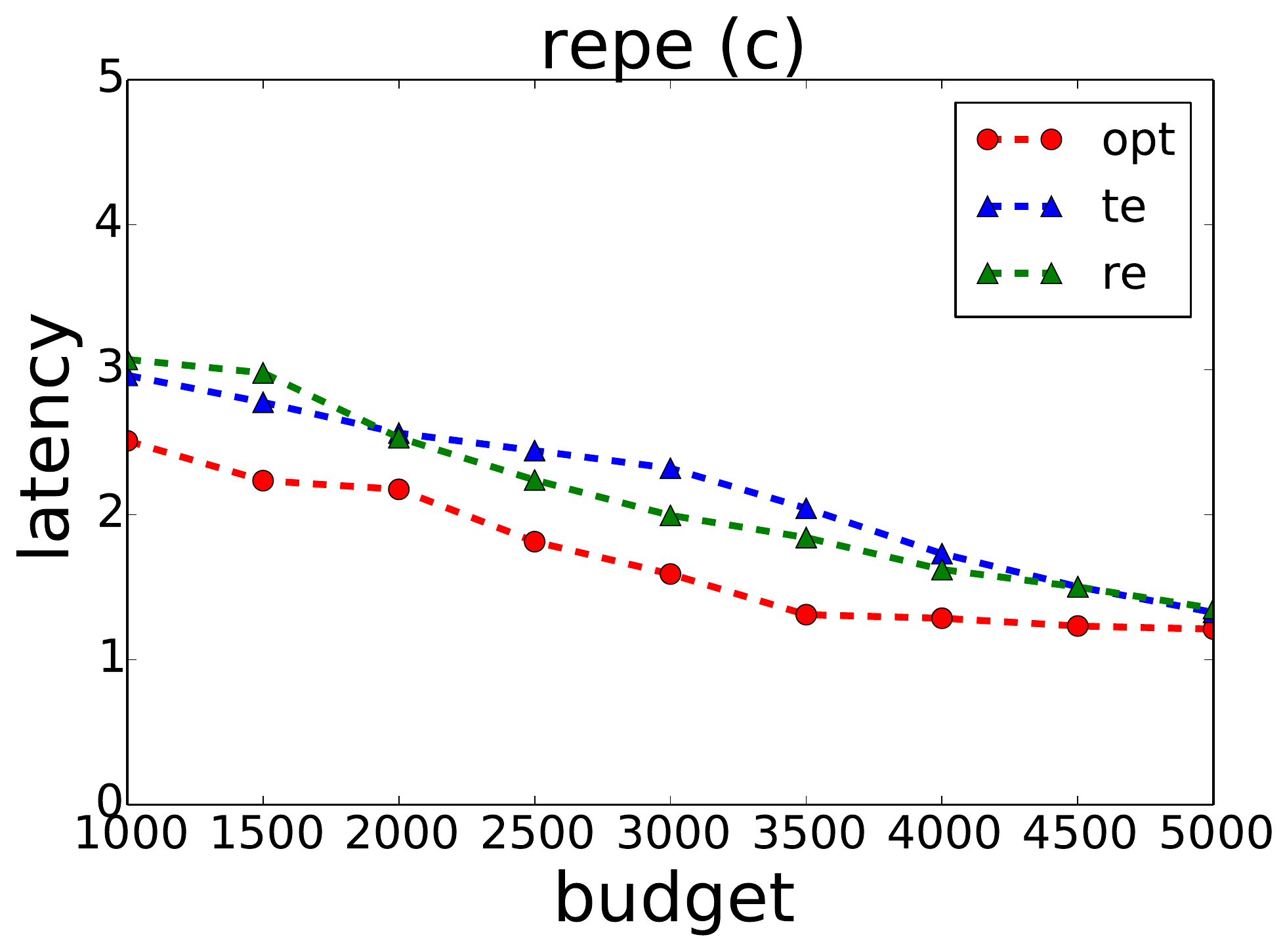}
}
\subfigure[Repetition($\lambda=1+p^{2}$)] { \label{fig:11}
\includegraphics[height = 1.0in,width=0.63\columnwidth]{./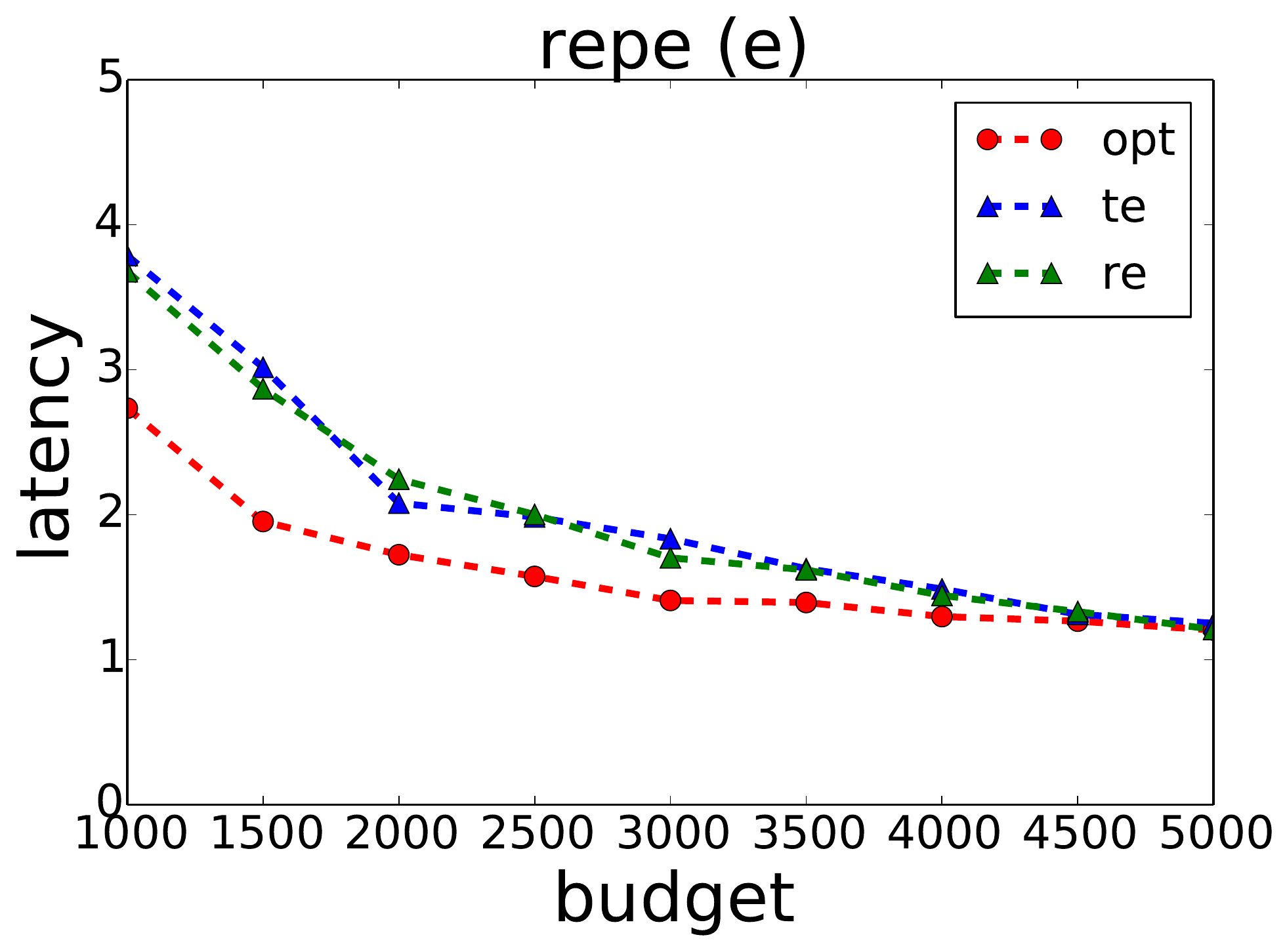}
}
\subfigure[Repetition($\lambda=\log{1+p}$)] { \label{fig:12}
\includegraphics[height = 1.0in,width=0.63\columnwidth]{./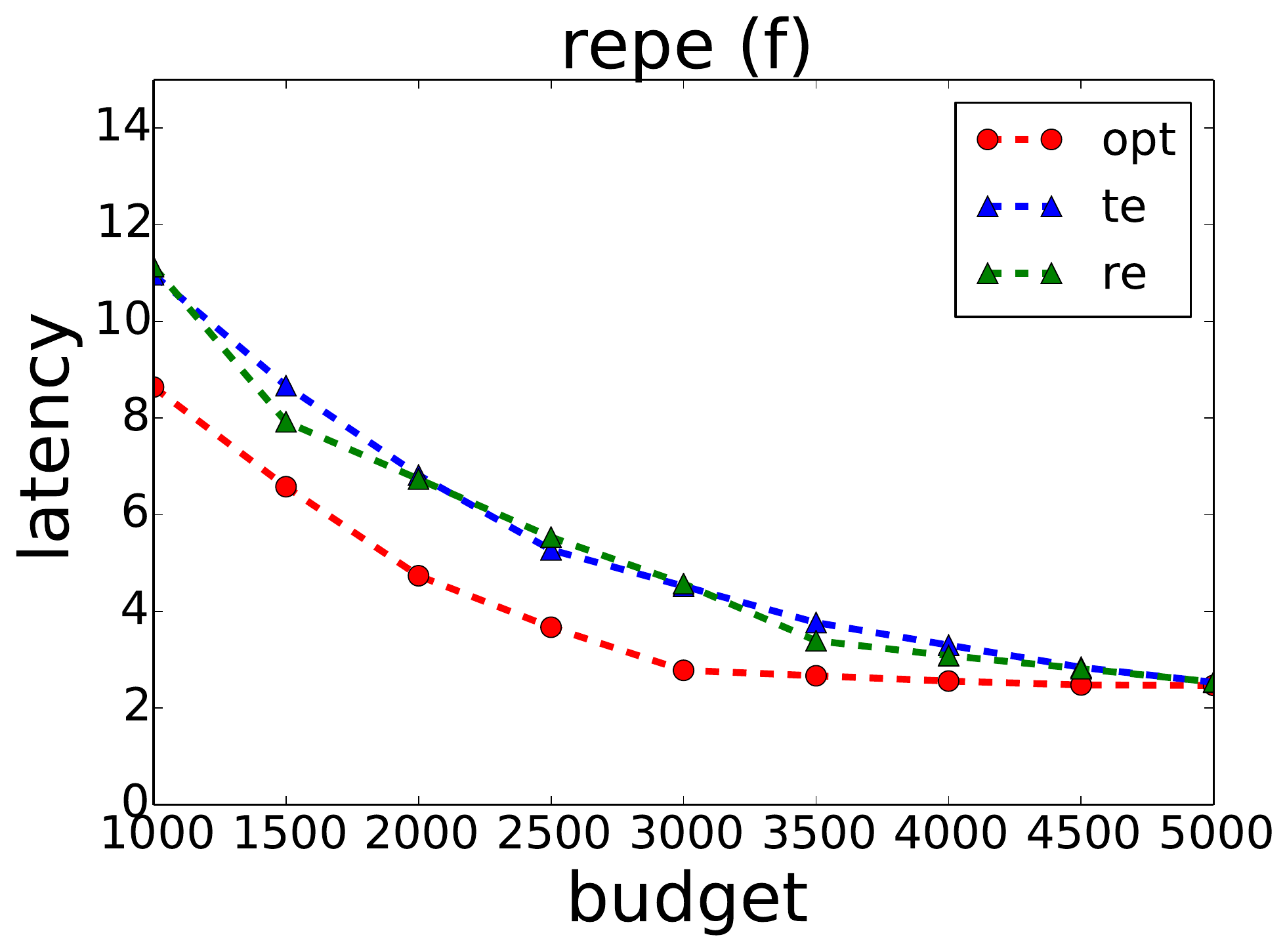}
}
\subfigure[Heterogeneous($\lambda=1+p$)] { \label{fig:13}
\includegraphics[height = 1.0in,width=0.63\columnwidth]{./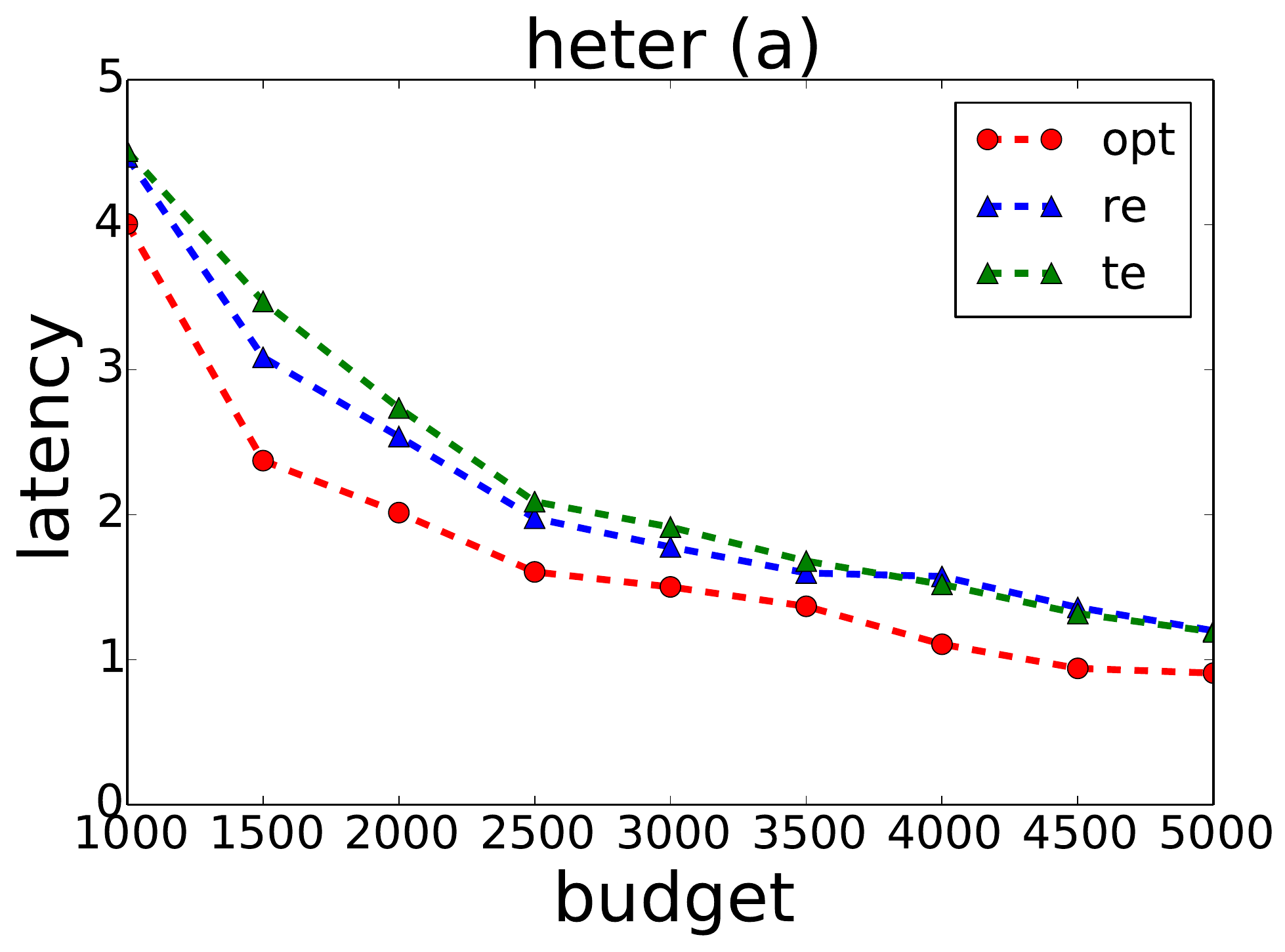}
}
\subfigure[Heterogeneous($\lambda=10p+1$)] { \label{fig:14}
\includegraphics[height = 1.0in,width=0.63\columnwidth]{./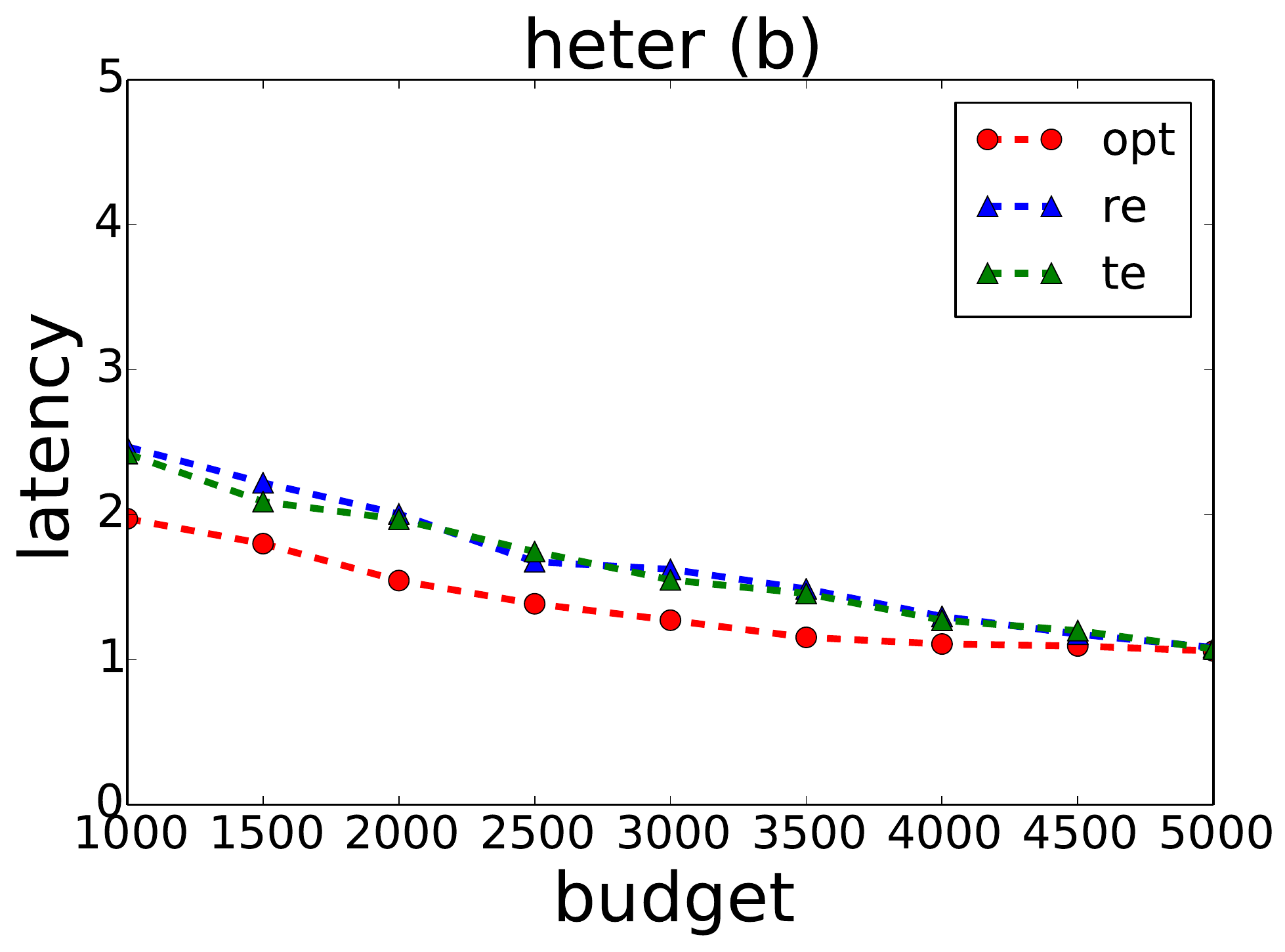}
}
\subfigure[Heterogeneous($\lambda=0.1p+10$)] { \label{fig:15}
\includegraphics[height = 1.0in,width=0.63\columnwidth]{./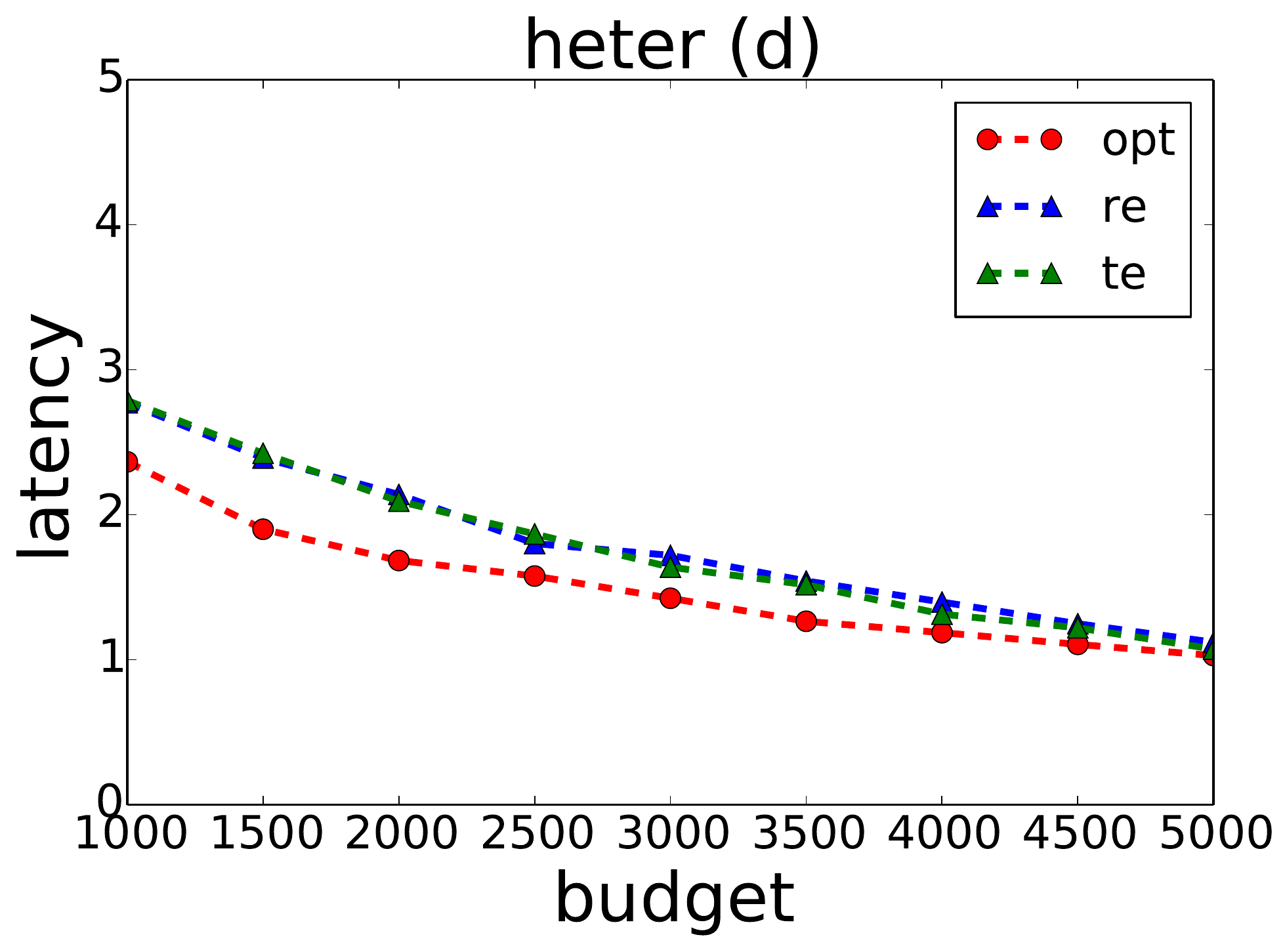}
}
\subfigure[Heterogeneous($\lambda=3p+3$)] { \label{fig:16}
\includegraphics[height = 1.0in,width=0.63\columnwidth]{./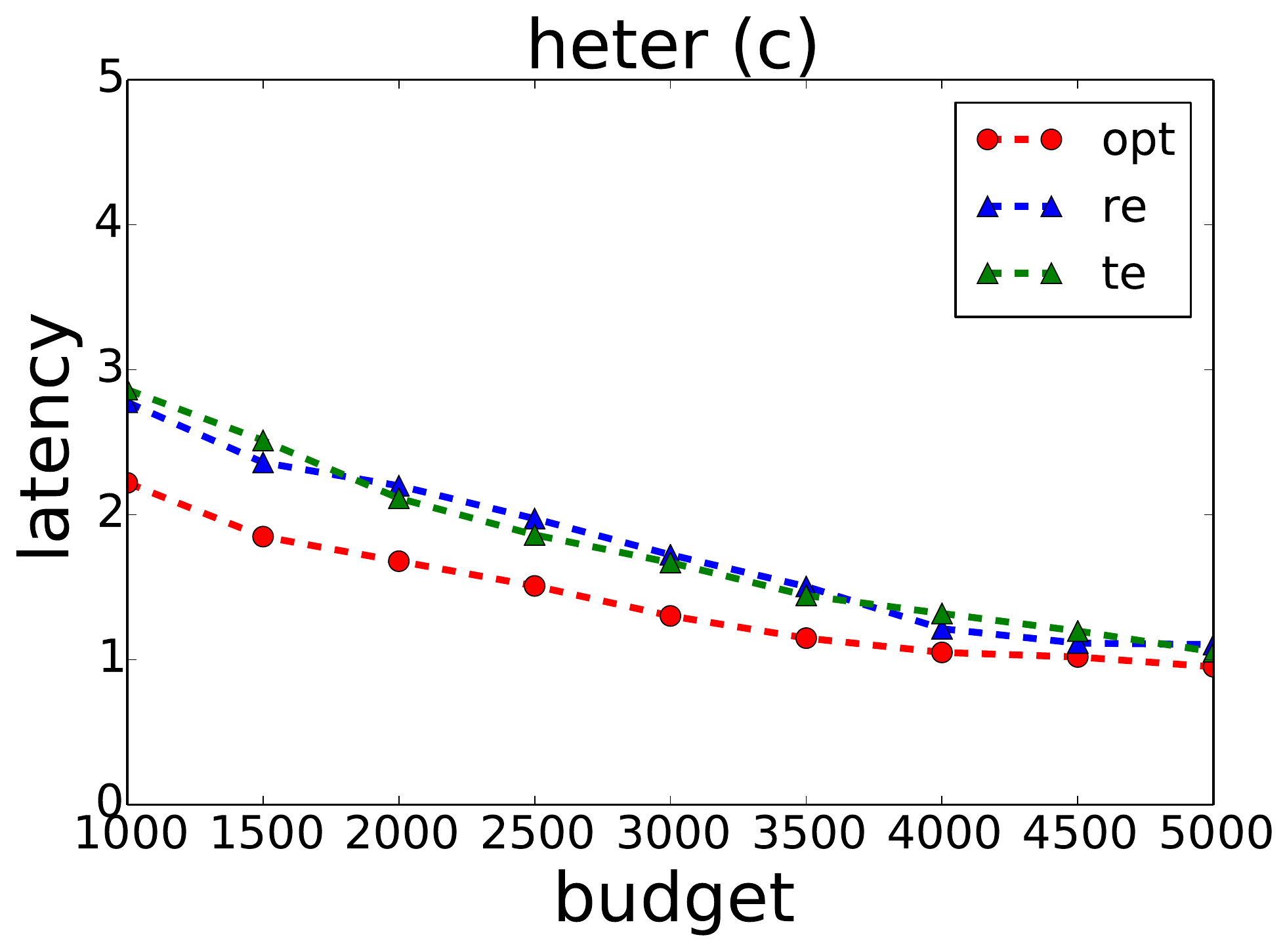}
}
\subfigure[Heterogeneous($\lambda=1+p^{2}$)] { \label{fig:17}
\includegraphics[height = 1.0in,width=0.63\columnwidth]{./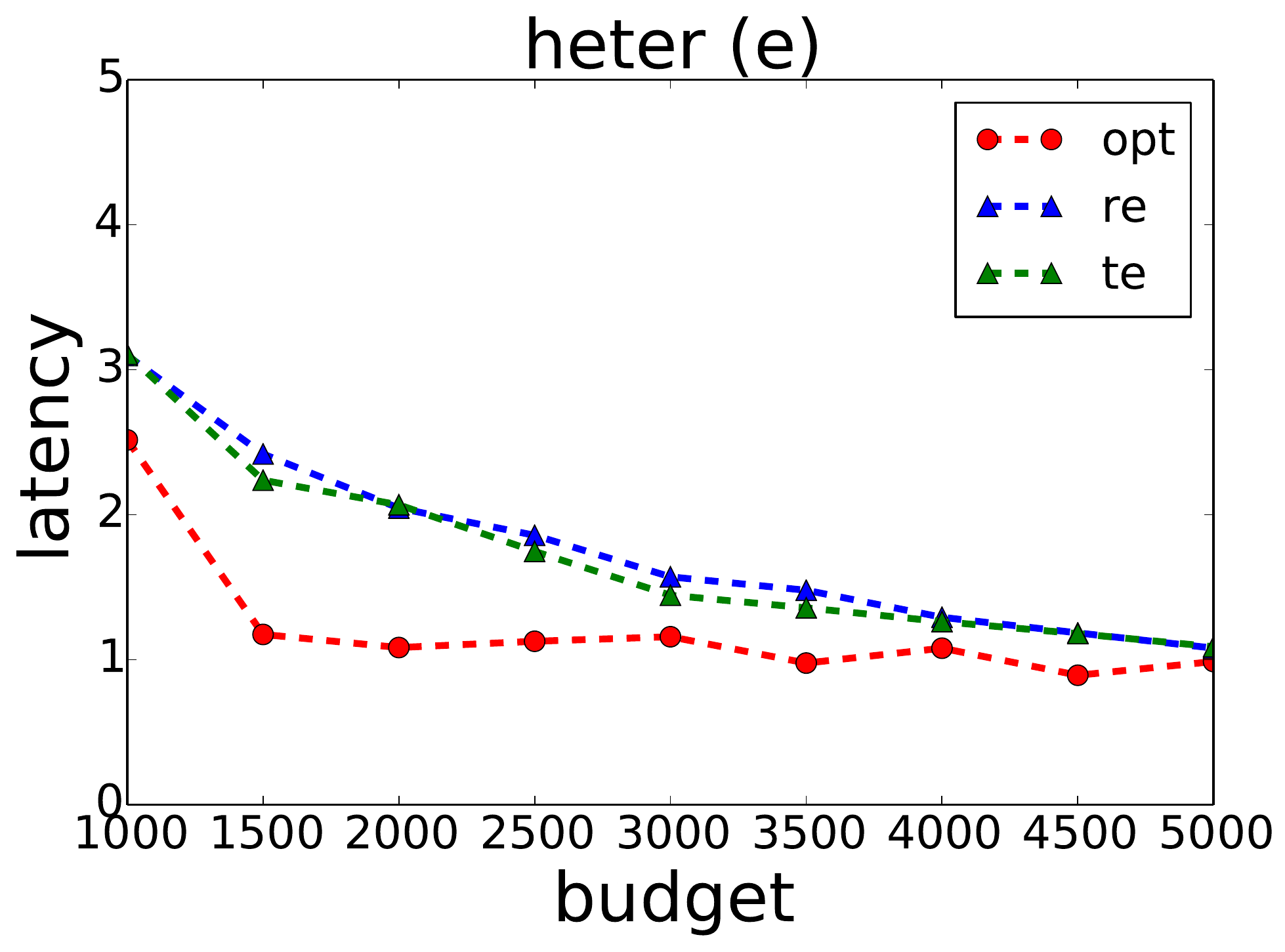}
}
\subfigure[Heterogeneous($\lambda=\log{1+p}$)] { \label{fig:18}
\includegraphics[height = 1.0in,width=0.63\columnwidth]{./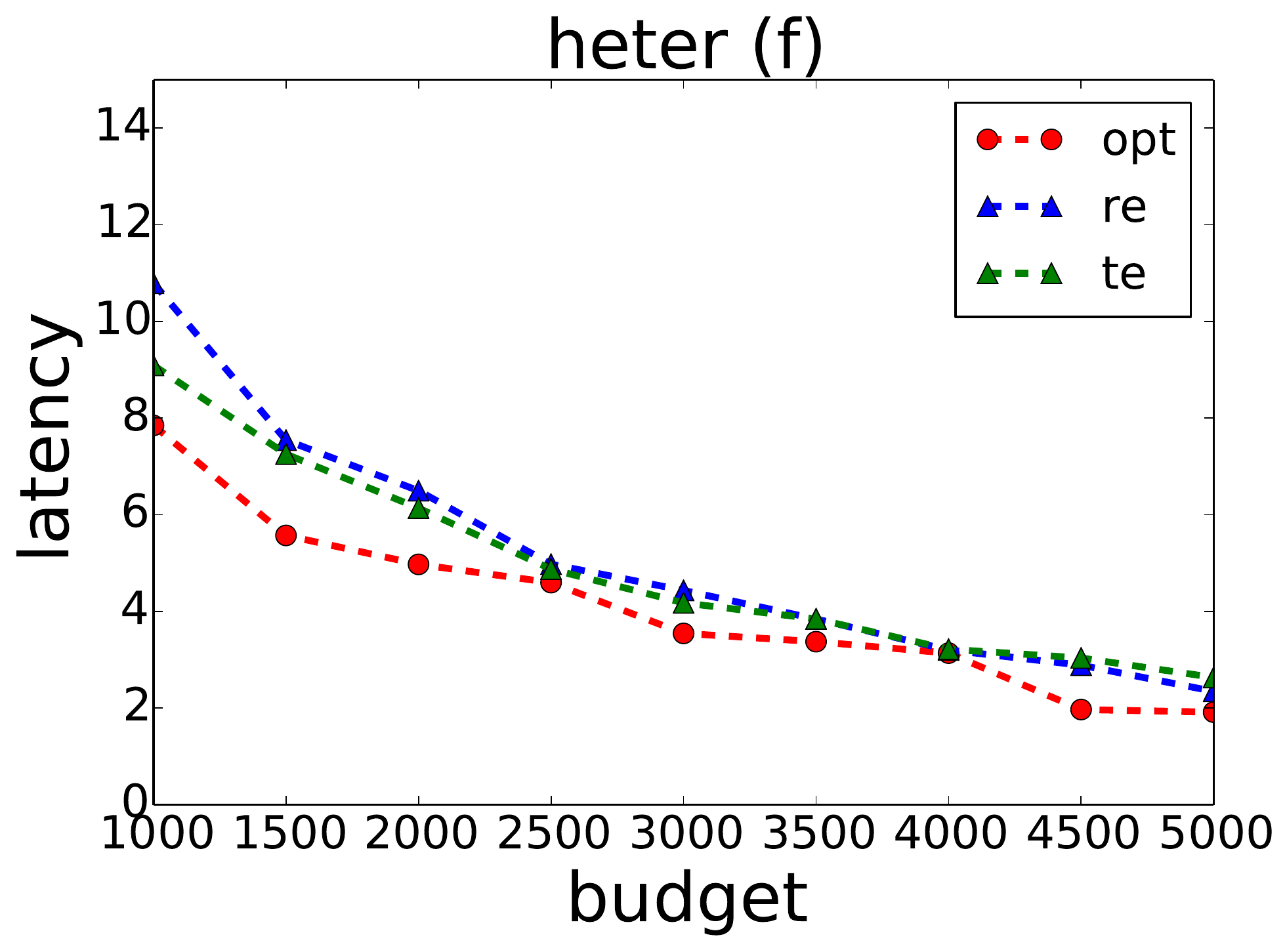}
}
\caption{Experiments on Synthetic Data}
\end{figure*}

\vspace{-1em}
\section{Experiments}\label{section:4experiments}
We extensively evaluated our model and optimization techniques, and report on the results here.
While the gold standard is performance on a real platform, we can exercise greater control and thereby get a better empirical understanding of our system through simulation.
Therefore, we did both.  We report first on simulation results with synthetic data, and then on jobs executed on Amazon Mechanical Turk.

\subsection{HPU Traits Testing with Synthetic Data}
\subsubsection{Experiments Settings}

We conduct six sets of experiment for each Scenario. The first four sets are linear model based, which aim to verify the effectiveness of the tuning strategy under the linear Hypothesis, and the last two sets are nonlinear model based, which aim to test the robustness of the tuning strategy. For the linear model based experiment, the model parameters are set as $\lambda^o=p+1$, $\lambda^o=10p+1$, $\lambda^o_=0.1p+10$, $\lambda^o=3p+3$. For the nonlinear part, the parameters are set as $\lambda^o=1+p^2$ and $\lambda^o=\log{(1+p)}$. The total number of task is set to be 100 uniformly for each set of the experiment and the budget varies from 1000 to 5000. 

\textbf{Homogeneity} All tasks call fro 5 repetitions. As the difficulty of the tasks are identical, the clock rate $\lambda_{p}$ for the processing latency is uniformly set to be $2.0$. Since the optimistic solution is produced by the \emph{even allocation} (\emph{algorithm 1}), biased allocation strategies are adopted as the baseline comparison. Instead of allocating the budget evenly, the biased method gives more payment to one half of the tasks, while less payment to the other. Specifically, half of the tasks are randomly selected as ``the prior group'' which take up $\alpha$ ($\frac{1}{2} < \alpha < 1$) of the total budget ($\alpha=\frac{1}{2}$ leads to the even allocation), and the remaining tasks get the $1-\alpha$ of the total budget. The value of the alpha is set to be $0.67$, and $0.75$ in our experiment.

\textbf{Repetition} The tasks are divided equally into two groups: one group is of 3 repetitions for each task, while the other group is of 5 repetitions for each of the tasks. Still, $\lambda_{p}$ is uniformly set to be $2.0$ due the identical setting of the difficulty. Two baseline methods are chosen as for the comparison. The first method is called \textit{task-even} allocation, which gives identical price to each task, then every task allocate the total budget evenly to each of its repetitions. Therefore, repetition price for group 2 is 60\% of that of group 1. The second one is called \textit{rep-even} allocation, which gives identical price to each repetition of all the tasks. So the total price for the tasks in the group 1 is 60\% of that of group 2.   

\textbf{Heterogeneous} The tasks are dived into two groups: task in the first group call fro three repetitions, while tasks in the second group call fro five repetitions. 
Then $\lambda_p$ is set to be $2.0$ and $3.0$ these two groups, respectively. Same with scenario \rom{2}, the \textit{rep-even} and \textit{task-even} are chosen the baseline methods for the comparison.  

We also conduct experiments with different settings of the budget, task amount, repetitions, and difficulty. However, there's no significant variance between different settings. Therefore, we simply demonstrate the results of the above setting for further analysis.

\subsubsection{Results Summary}
From the experiment results, the optimal solution outperforms the comparisons in terms of latency in every cases. For results of scenario 1 (homo), the ``\emph{bias\_1}'' produces slightly better performance than ``\emph{bias\_2}''. This is because \emph{bias\_2} is more biased (the value of $\alpha$ is larger) than \emph{bias\_1}. Such phenomenon further verified our conclusion that even allocation leads to the optimal budget plan for the Scenario 1. Besides, we can find that although the optimal solution of the Scenario 1 is designed based on the linear hypothesis, it still works for the nonlinear cases (\emph{homo(e)} and \emph{homo(f)}), which can be partially explained by the varying range of the payment: the task price varies form 1 to 9. For such relatively low prices, the non-linear relationship can be linearly approximated quite well.
We can further find that the optimal results are relatively close to the comparison in case (b) and (c), for all the scenarios. For case (b), such phenomenon can be caused the large value of the linear coefficient ($lambda=10p+1$). When the linear coefficient is large, the on-hold clock rate is sensitive to the change of price. When price grows, the clock rate increases much more faster, making the oh-hold latency decrease to a low level with a relatively lower price. In this situation, the overall latency will be mostly determined by the processing phase. Similar phenomenons can be observed for case (e) in each of the scenarios, where overall latency reduces sharply for the initial prices, and soon get to a stable level. While case (c) is another extreme, where $\lambda$ is fairly insensitive to the price changes. In this situation, and the latency is largely determined by the initial setting of on-hold and processing phase, and price does little to change it.

Finally, we can summarize the findings of the synthetic experiment as follows: 1) the optimal tuning strategy is robust to non-linearity. The unit price for each task is usually small, therefore the linearity hypothesis holds for normal cases. 2) The optimal tuning strategy is sensitive to the \emph{price-}$lambda$ relationship: when $lambda$ is sensitive to the change of price, the on-hold latency drops sharply with the growing price. Then the overall latency is determined by the processing time and it's unnecessary to keep on increasing the price.  

%

\subsection{Tuning Tasks on Amazon MTurk}
\vspace{-0.5em}
\begin{figure}
\centering \includegraphics[height = 0.9in,width=0.75\columnwidth]{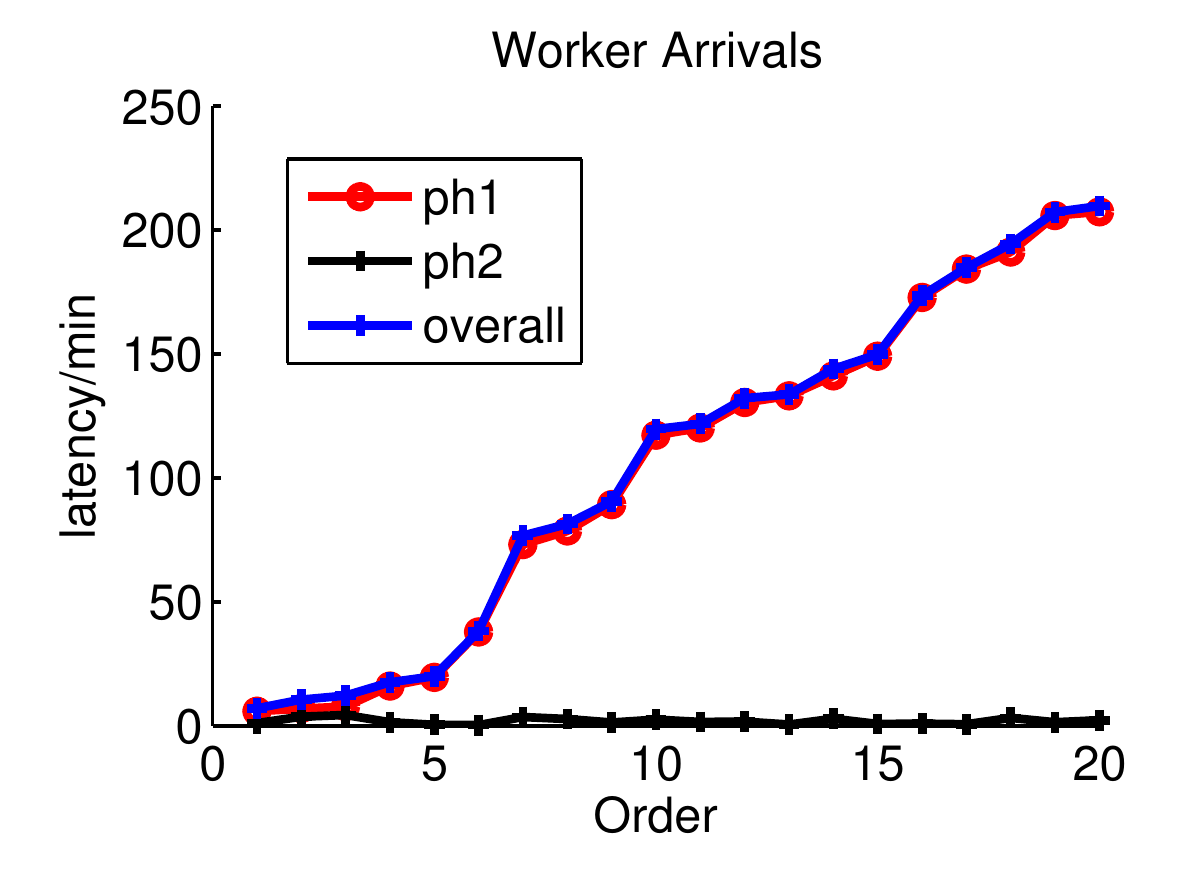}
\caption{Worker Arrival Moments} \label{fig:20}
\end{figure}

\begin{figure}
\centering \includegraphics[height = 0.9in,width=0.75\columnwidth]{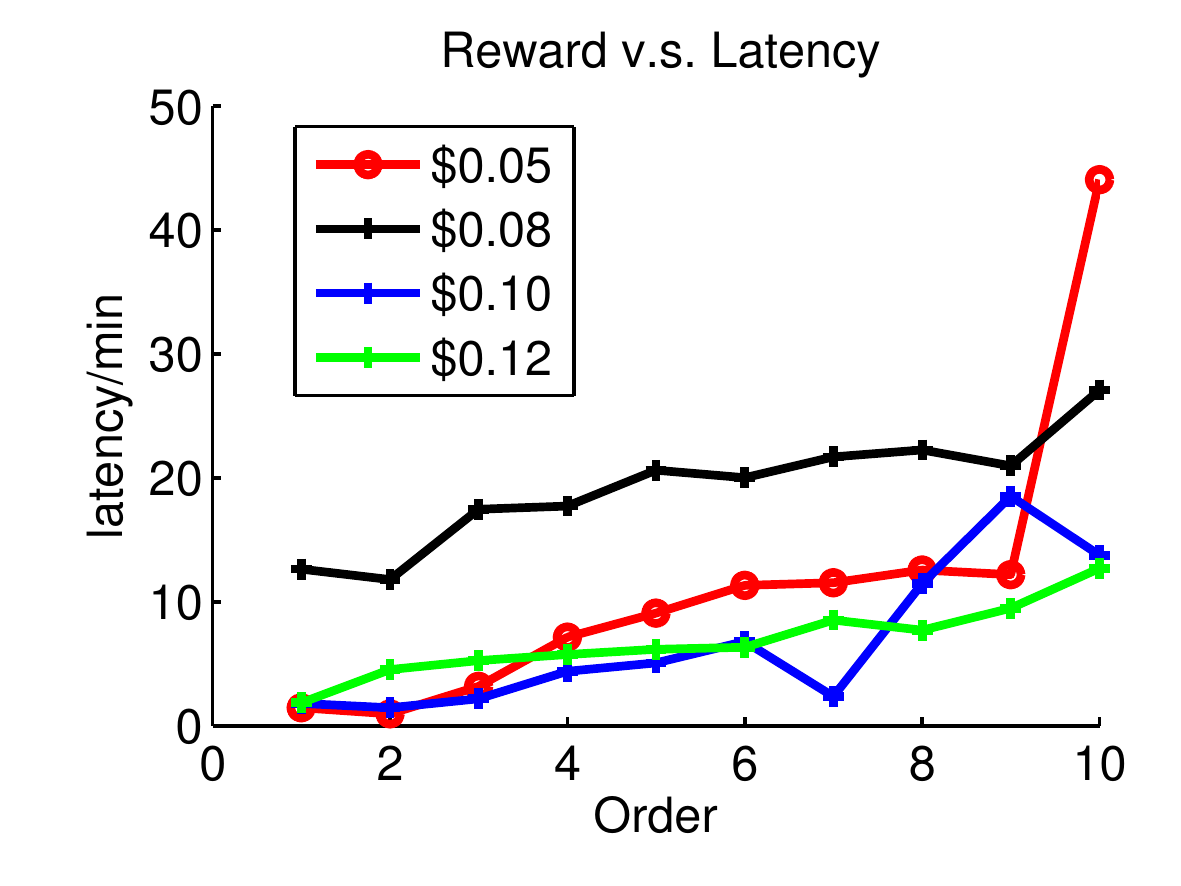}
\caption{Money v.s. Latency} \label{fig:21}
\end{figure}

\begin{figure*}[htbp] \centering
\subfigure[Difficulty v.s. Phase 1] { \label{fig:22}
\includegraphics[height = 0.9in,width=0.63\columnwidth]{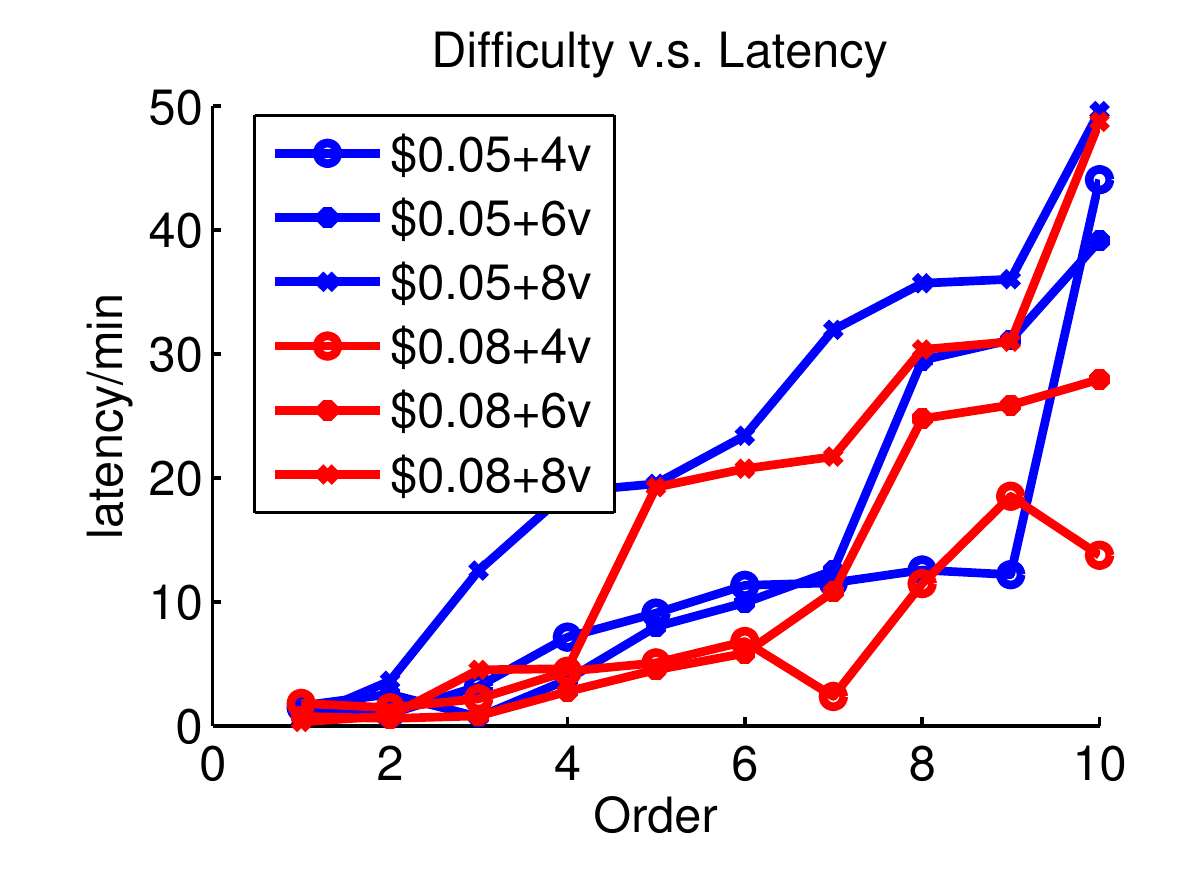}
}
\subfigure[Difficulty v.s. Phase 2] { \label{fig:23}
\includegraphics[height = 0.9in,width=0.63\columnwidth]{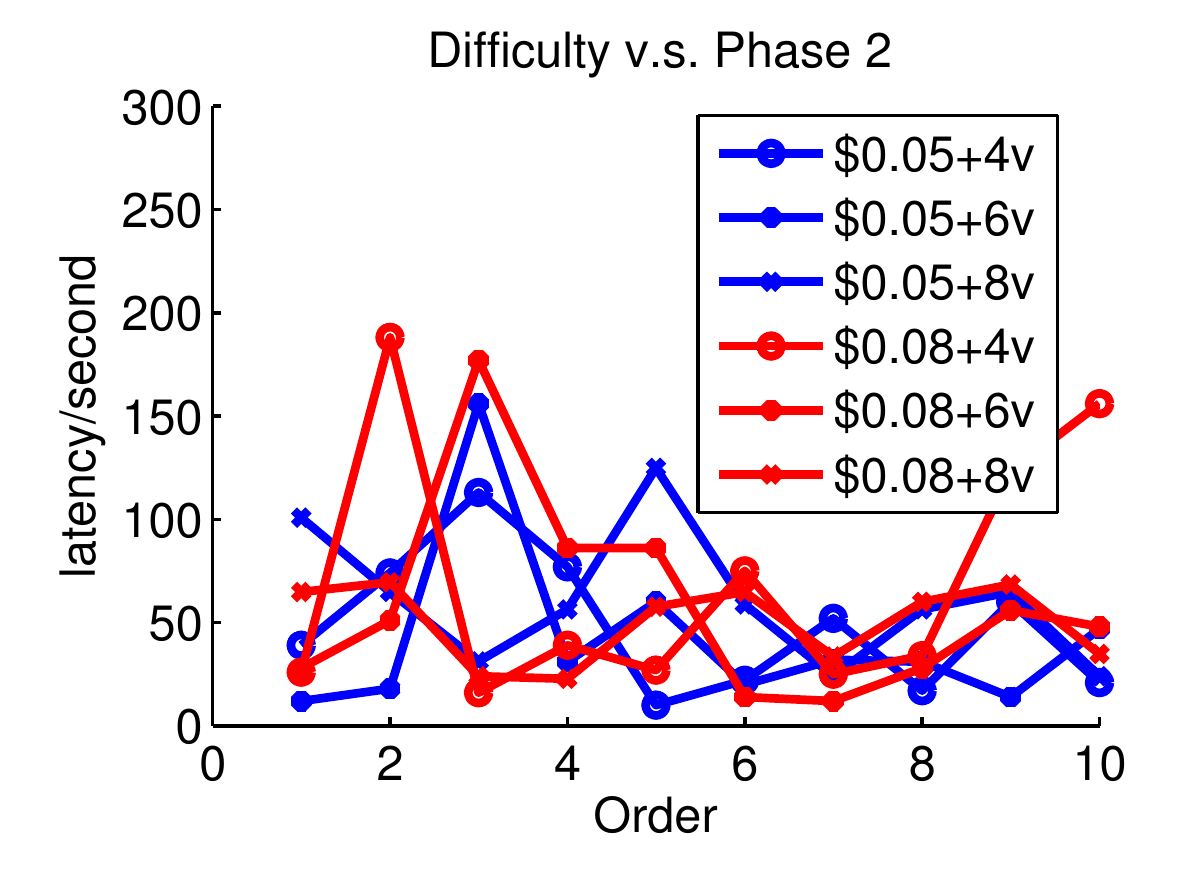}
}
\subfigure[OPT v.s. Heuristic] { \label{fig:24}
\includegraphics[height = 0.9in,width=0.63\columnwidth]{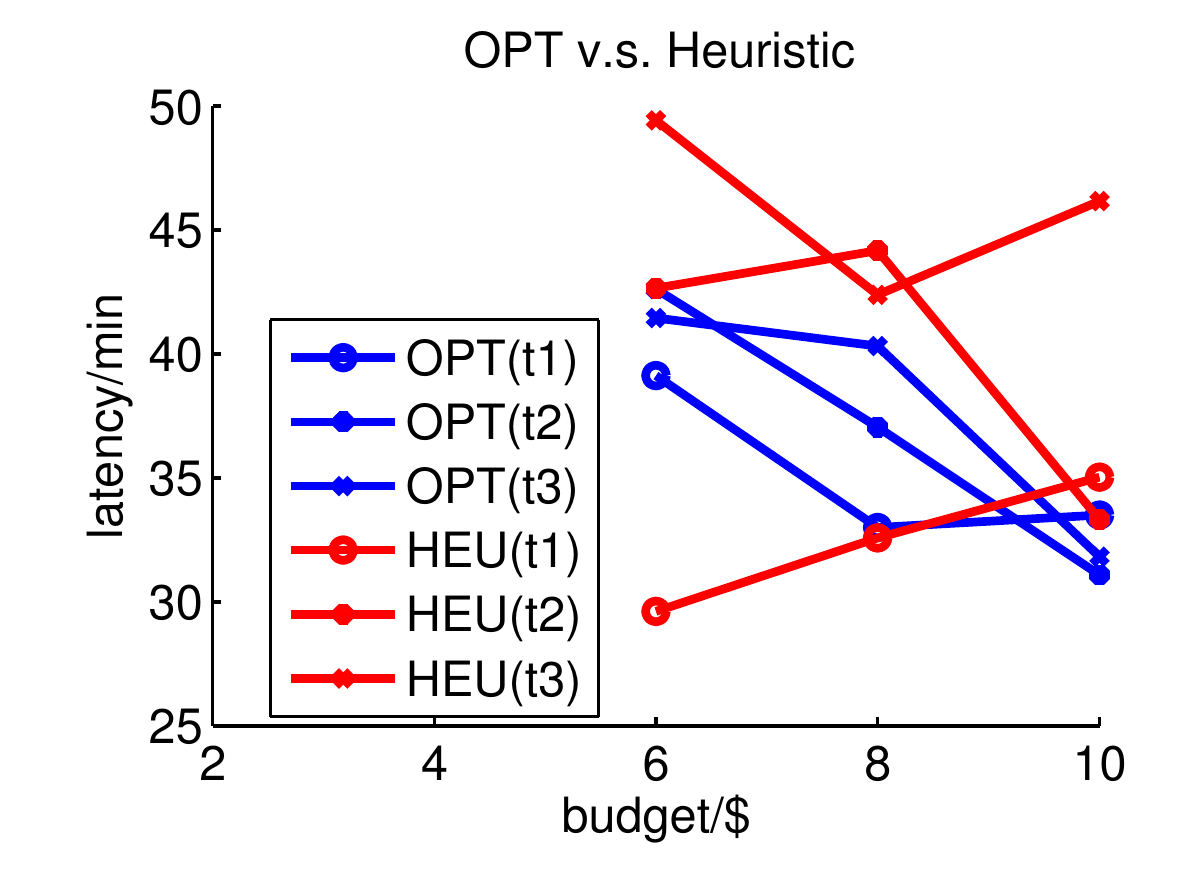}
}
\caption{Experiments Results II}\label{fig:II}
\end{figure*}
\subsubsection{Experiments Settings}
We create a set of image filtering tasks as the atomic tasks: we first present the workers an image with the exact number of the dots on it, then a set of images are presented to the ``workers'' and they are required to estimate the number of dots on each image. Based on the estimation, ``workers'' are expected to filter out the ones who have dots less than a given threshold. Under such settings, the cognitive abilities of ``recognizing'' and ``counting'' are utilized, and the task is finished by presenting a set of binary voting(clicking on the checkbox). In addition, the ``workers'' receive their rewards when the provided answers are correct. We control the difficulty or type of tasks by varying the images given in a single tasks.

Our work focuses on tuning upon budget allocation and real time latency, thus we purposely design the experiment simple enough and avoid setting any worker qualifications and inter-rater agreement. In fact, in real scenarios, the atomic tasks on lowest level are just the same as the experimental tasks: comparing items, screening out candidates and simple ranking.


\subsubsection{Results Summary}

Firstly, in Fig~\ref{fig:20} we present the general behavior of the worker appearance and the latency of processing time. We issue image filtering tasks with 1 unit reward(\$0.05), and collect the first 20 arrivals. As shown in Fig~\ref{fig:20}, the arrival epochs of the workers exhibit linearity, indicating the suitability of the Poisson Process Model, while the latency of the second phase fluctuates in a small range.

Then we examine the effect of varying the rewards: we vary the reward on a single task from \$0.05 to \$0.12, while for each task we require 10 repetitions. The results can be found in Fig~\ref{fig:21}, where it is obviously that the increase on rewards incurs shorter latencies. According to the methodologies introduced in Section~\ref{section:running_para}, we obtain the corresponding parameters($s^{-1}$), $\lambda_{1}=0.0038,\lambda_{2}=0.0062,\lambda_{3}=0.0121,\lambda_{4}=0.0131$, which supports the Linearity Hypothesis proposed in Section~\ref{section:hypo}.

In the sequel, we present the results of examining the effect of
varying the type of the tasks: we vary the internal binary voting
number from 4 to 8. Such change of difficulty results in the
decrease of the coming rate (see Fig~\ref{fig:22}), and
the increase of the average processing time, which is shown
in Fig~\ref{fig:23}. We then evaluate our proposed algorithms on
Amazon MTurk, especially under Scenario II and III. Namely, 3 types
of tasks are published with different repetition requirement: 10 for
$t1$, 15 for $t2$ and 20 for $t3$. The total budgets are also varied
from \$6 to \$10. We compare our algorithms(OPT) with the heuristic
where each type receives same payment under both two scenarios.
Results can be found in Fig~\ref{fig:24}, where the lower latency of
OPT shows the effectiveness of our algorithms. Note that at each
budget, the OPT successfully avoids yielding the longest
latency among the three tasks.

\section{Conclusion}\label{section:6conclusion}
In this paper, we address the problem of tuning the modularized
human computation, so that the latency in real clock time could be
minimized. The difficulty of such problem arises in the stochastic
behavior of the latency of the HPU. To address this challenge, we
theoretically and practically propose that appearance of the crowd
``workers'' follows a Poisson Process, whose parameter differs at
different budget levels and types of atomic tasks. Then we formally
propose the \textit{H-Tuning Problem} to optimize the expected
latency of the longest task. Moreover, under three most general scenarios on crowd-powered applications, advanced strategies
are designed to cope with the \textit{H-Tuning Problem}. Finally, a series of experiments conducted on both simulated data and real commercial platform observe the effectiveness
of the proposed model and strategies. To conclude, the crowdsourced
human computation is now equipped with primitive tuning ability in
terms of running time.


\begin{small}
\bibliographystyle{IEEEtran}
\bibliography{hpu}
\end{small}

\vspace{-1em}
\begin{appendix}

\section{Inference of Parameter $\lambda$}\label{section:infer}
\vspace{-0.3em}
\textbf{Fixed Period}
The likelihood function of parameter 
$\lambda$ is $L=\lambda_{N}\exp[-\lambda\sum_{k=1}^{N}T_{k}]\exp[-\lambda(T_{0}-t_{n})] 
=\lambda_{N}\exp[-\lambda T_{0}]$.
To maximize the likelihood, the ML estimation of $\lambda$ is derived as  $\hat{\lambda}=N/T_{0}$, which is unbiased according to Rao-Blackwell Theorem.

\textbf{Random Period}
Suppose each worker appears at the epochs $0<t_{1}<t_{2}\ldots<t_{N}$, we could obtain the likelihood function: $L=\lambda^{N}e^{-\lambda T_{0}}$.
Thus to maximize the likelihood, the ML estimation of $\lambda$ is given by  $\hat{\lambda}=\frac{N}{T_{0}}$(same as the one in Fixed Period). To remove the bias, further the parameter can be updated: $\tilde{\lambda}=((N-1)N)\hat{\lambda}$.

\vspace{-1em}
\section{Proof of Lemma~\ref{LEMMA1}}\label{section:proof_Lemma1}
\vspace{-0.5em}
\begin{proof}
As illustrated in the previous section, Phase 1 of both $t_{1}$ and
$t_{2}$ follow exponential distribution, whose parameters are
denoted as $\lambda_{t_{1}}^{o}$ and $\lambda_{t_{2}}^{o}$, and
according to Hypothesis I in Section~\ref{section:running_para},
when allocating $t_{1}$ with $x$ unit payment and $t_{2}$ with $B-x$
unit payment, $\lambda_{t_{1}}^{o}=kx$ and
$\lambda_{t_{2}}^{o}=k(B-x)$. ($k$ is the constant coefficient) With
the two parameters defined above, we can derive the following
relationship:
$E(\{t_{1},t_{2}\}) 
=E\left\{\max\left\{L^{o}(t_{1}),L^{o} (t_{2} )\right\} \right\} 
=\frac{\lambda_{t_{1}}^{o}+\lambda_{t_{2}}^{o}}{\lambda_{t_{1}}^{o}\lambda_{t_{2}}^{o}}
- \frac{1}{\lambda_{t_{1}}^{o}+\lambda_{t_{2}}^{o}}$ 
As $\lambda_{t_{1}}^{o}=kx$ and $\lambda_{t_{2}}^{o}=k(B-x)$,
$E_({t_{1},t_{2}})$ turns out to be a convex function and reaches its
minimum point when $x=\frac{B}{2}$(or $\left \lfloor B/2 \right \rfloor$ if $B$ is odd). Hence, allocating $t_{1}$ and $t_{2}$ with
$\frac{B}{2}$ unit payments leads to the minimum
expected latency.
\end{proof}
\vspace{-2em}
\section{Proof of Lemma~\ref{LEMMA2}}\label{section:proof_Lemma2}
\vspace{-0.5em}
\begin{proof}
Let $\{p_{1},\ldots, p_{m}\}$ denote the payment allocated to each repetition of atomic task $t$, and $\{\lambda_{1}^{o},\ldots, \lambda_{m}^{o}\}$ denote the exponential parameter of each repetition. It is obvious to see that $\sum_{i=1}^{k}p_{i}=B$, and based on Hypothesis $1$, $\lambda_{i}^{o}=k*p_{i}$. With above information, we can derive the expected latency of $t$ as follows.\\
$E\{L(t)\}=\sum_{i=1}^{m}1/\lambda_{i}^{o}=\sum_{i=1}^{m}1/kp_{i}$.
Since $E\{L(t)\}=\sum_{i=1}^{m}1/kp_{i}\leq B^{2}/km$, and the
equality is established $iff.$ $\forall
i\in\{1,\ldots,m\}$,$x_{i}=B/m$. Therefore, we come to the
conclusion that allocating the budget evenly to each repetition
of the atomic task leads to the minimum expected latency.
\end{proof}

\vspace{-2em}
\section{Proof of Theorem~\ref{THEOREM1}}\label{section:proof_even_allocation}
\vspace{-0.5em}
\begin{proof}
This theorem is proved with mathematical induction.
Firstly, the theorem holds when both atomic tasks are run exactly once, which is a direct result of Lemma $1$.
Then, we prove that the theorem still holds when the repetitions increases to $n+1$ on condition that it hold with repetition equals to $n$.
Suppose we have a two identical tasks $t_{1}$ and $t_{2}$, which require $n$ reps. Now we allocate each rep with $x$ unit payment.Based on our presumption, this budget allocation leads to the minimum expected latency. Let $H$ denotes the completion of task $t_{1}$, and the completion of both tasks as $\max{\{H,H\}}$.
Now we increase the repetitions of both tasks to $n+1$, and the budget to $2(n+1)x$. Suppose we have a better budget allocation which outperforms allocating the budget evenly. It is trivial to see that one task will be allocated with more payment and the other atomic task will be allocated with less payment. At the same time the payment for each repetition of the same atomic task remains identical. Let $I$ denote the completion of the first $n$ repetitions of $t_{1}$ and $i$ denote the event of the completion of the last repetition of $t_{1}$. So the completion of $t_{1}$ is denoted as $\{I+i\}$.  Let $J$ denote the completion of the first $n$ repetitions of $t_{2}$ and $j$ denote the completion of the last repetition of $t_{2}$. The completion of $t_{2}$ can be denoted as ${J+j}$.
Similarly, when allocating the budget evenly to each repetition of both tasks, we use $H$ to denote the first $n$ repetitions of the atomic task and $h$ to denote the last repetition of the atomic task, and the completion of $t_{1}$ (or $t_{2}$) is denoted as $\{H+h\}$.
Then, the completion of both tasks with the assumed optimal budget allocation is denoted by $\max{\{\{I+i\},\{J+j\}\}}$ and the completion of both tasks with the evenly allocated budget is denoted as $\max{\{\{H+h\},\{H+h\}\}}$.
Here, we will have the flowing relationship: $E\{\max{\{\{I+i\},\{J+j\}\}} =(E\{I+i\}+E\{J+j\})-E\{\min{\{\{I+i\},\{J+j\}\}}\}$,
and \\
$E\{\max{\{\{H+h\},\{H+h\}\}}\}$
$ = (E\{H+h\}+E\{H+h\})-E\{\min{\{\{H+h\},\{H+h\}\}} \}$.
As $E\{I+i\}+E\{J+j\}=n/(k(x-\varepsilon))+n/(k(x+\varepsilon))$ $\leq 2n/kx=E\{H+h\}+E\{H+h\}$, we can get the result that
$ E\{\max{\{\{I+i\},\{J+j\}\}}\}\geq$ \\$E\{\max{\{\{H+h\},\{H+h\}\}}\}$.
This shows that the theorem still holds when the repetitions increases to $n+1$, which prove the theorem to be truth.
\end{proof}
\vspace{-1.5em}
\section{Proof of Lemma~\ref{LEMMA3}}\label{section:proof_Lemma3}
\vspace{-1em}
\begin{proof}
Let $r_{i}$ denote the $i$th repetition of task $x$. According to Lemma
$1$,$\forall r_{i}\in \left \{ r_{1},\ldots, r_{n}\right \}$,
$L(r_{i})$ follows exponential distribution of the same parameter
$\lambda$. So $L(x)=L(\sum_{i=1}^{k}r_{i})$. This meets the
requirement of Erlang distribution and makes $L(x)\sim
Erl(k,\lambda)$.
\end{proof}

\end{appendix}

\end{document}